\renewcommand{\section}{\@startsection%
{section}%
{1}%
{0em}%
{1.7em}%
{1.2em}%
{\normalfont\large\centering\bfseries}}
\renewcommand{\@seccntformat}[1]%
{\csname the#1\endcsname.\hspace{0.5em}}
\numberwithin{equation}{section}
\newtheorem{theorem}{Theorem}[section]
\newtheorem{proposition}{Proposition}[section]
\newtheorem{lemma}{Lemma}[section]
\newtheorem{corollary}{Corollary}[section]
\theoremstyle{definition}
\newtheorem{definition}{Definition}
\newtheorem{remark}{Remark}
\newtheorem*{convention}{Convention}
\newtheorem*{acknowledgments}{Acknowledgments}
\newcommand{\reals}{\mathbb{R}}
\newcommand{\nats}{\mathbb{N}}
\newcommand{\complex}{\mathbb{C}}
\newcommand{\abs}[1]{\left|#1\right|}
\newcommand{\inner}[2]{\left\langle#1,#2\right\rangle}
\newcommand{\eval}[1]{\upharpoonright_{#1}}
\def\cprime{$'$}
\def\ocirc#1{\ifmmode\setbox0=\hbox{$#1$}\dimen0=\ht0 \advance\dimen0
  by1pt\rlap{\hbox to\wd0{\hss\raise\dimen0
  \hbox{\hskip.2em$\scriptscriptstyle\circ$}\hss}}#1\else {\accent"17 #1}\fi}
\DeclareMathOperator{\im}{Im}
\DeclareMathOperator{\dom}{dom}
\DeclareMathOperator{\diag}{diag}
\DeclareMathOperator*{\res}{Res}
\DeclareMathOperator{\Span}{span}
\begin{document}
\begin{titlepage}
\title{Inverse problems for Jacobi operators II:\\
Mass perturbations of semi-infinite mass-spring systems
\footnotetext{%
Mathematics Subject Classification(2010):
34K29,  
47A75, 
47B36, 
70F17, 
}
\footnotetext{%
Keywords:
Infinite mass-spring system;
Jacobi matrices;
Two-spectra inverse problem.
}
}
\author{
\textbf{Rafael del Rio}
\\
\small Departamento de M\'{e}todos Matem\'{a}ticos y Num\'{e}ricos\\[-1.6mm]
\small Instituto de Investigaciones en Matem\'aticas Aplicadas y en Sistemas\\[-1.6mm]
\small Universidad Nacional Aut\'onoma de M\'exico\\[-1.6mm]
\small C.P. 04510, M\'exico D.F.\\[-1.6mm]
\small\texttt{delrio@leibniz.iimas.unam.mx}
\\[4mm]
\textbf{Mikhail Kudryavtsev}
\\
\small Department of Mathematics\\[-1.6mm]
\small Institute for Low Temperature Physics and Engineering\\[-1.6mm]
\small Lenin Av. 47, 61103\\[-1.6mm]
\small Kharkov, Ukraine\\[-1.6mm]
\small\texttt{kudryavtsev@onet.com.ua}
\\[4mm]
\textbf{Luis O. Silva}\thanks{%
Partially supported by CONACYT (M\'exico) through grant CB-2008-01-99100
}%
\\
\small Departamento de M\'{e}todos Matem\'{a}ticos y Num\'{e}ricos\\[-1.6mm]
\small Instituto de Investigaciones en Matem\'aticas Aplicadas y en Sistemas\\[-1.6mm]
\small Universidad Nacional Aut\'onoma de M\'exico\\[-1.6mm]
\small C.P. 04510, M\'exico D.F.\\[-1.6mm]
\small \texttt{silva@leibniz.iimas.unam.mx}
}
\date{}
\maketitle
\vspace{-4mm}
\begin{center}
\begin{minipage}{5in}
  \centerline{{\bf Abstract}} \bigskip We consider an inverse spectral
  problem for infinite linear mass-spring systems with different
  configurations obtained by changing the first mass. We give
  results on the reconstruction of the system from the spectra of two
  configurations.  Necessary and sufficient conditions
  for two real sequences to be the spectra of two modified
  systems are provided.
\end{minipage}
\end{center}
\thispagestyle{empty}
\end{titlepage}
\section{Introduction}
\label{sec:intro}
In this work we treat the two spectra inverse problem for Jacobi
operators in $l_2(\nats)$. The Jacobi operators considered here are
obtained from each other by a particular kind of rank-two
perturbation. The special form of the perturbation has a physical
motivation; it is the extension to the semi-infinite case of an
inverse problem for finite mass-spring systems studied in
\cite{delrio-kudryavtsev} and \cite{Ram}.

The Jacobi operator $J$ in the Hilbert space $l_2(\nats)$ is the operator
whose matrix representation with respect to the canonical basis in
$l_2(\nats)$ is a semi-infinite Jacobi matrix of the form
\begin{equation}
  \label{eq:jm-0}
  \begin{pmatrix}
    q_1 & b_1 & 0  &  0  &  \cdots
\\[1mm] b_1 & q_2 & b_2 & 0 & \cdots \\[1mm]  0  &  b_2  & q_3  &
b_3 &  \\
0 & 0 & b_3 & q_4 & \ddots\\ \vdots & \vdots &  & \ddots
& \ddots
  \end{pmatrix}\,,
\end{equation}
where $q_n\in\reals$ and $b_n>0$ for any $n\in\nats$ (see in
\cite{MR1255973} the definition of the matrix representation of an
unbounded symmetric operator). $J$ is closed by definition and it may
be self-adjoint or have deficiency indices (1,1). In this work we deal
with self-adjoint operators, so, if $J\ne J^*$, we consider its
self-adjoint extensions denoted $J^{(g)}$, where
$g\in\reals\cup\{\infty\}$ (see Definition~\ref{def:s-a-ext-def} a)).
If $J=J^*$ we assume $J^{(g)}=J$ for all $g\in\reals\cup\{\infty\}$
(see Definition~\ref{def:s-a-ext-def} b)).

The two spectra inverse problem for Jacobi operators $J^{(g)}$ takes
as input data the spectra of two operators in a operator family
obtained by perturbing $J^{(g)}$ in a certain way. The solution of the
problem is the finding of the matrix (\ref{eq:jm-0}) and the
``boundary condition at infinity'' $g$ if necessary. The case of the
operator family consisting of rank-one perturbations of a self-adjoint
Jacobi operator has been amply studied in
\cite{MR49:9676,MR499269,MR0221315} and, in the more general setting
of rank-one perturbations of $J^{(g)}$, in
\cite{weder-silva,MR1643529}.  Rank-one perturbations can be seen as a
change of the ``boundary condition at the origin'' for the
corresponding difference equation (see \cite[Appendix]{weder-silva}).
We remark that the case of finite Jacobi matrices has also been
thoroughly studied (see
\cite{Chu-Golub,deBoor-Golub,MR1616422,MR2102477,MR0382314}).

It is known that the dynamics of a finite mass-spring system is
characterized by the spectral properties of a finite Jacobi matrix
\cite{MR2102477}. Accordingly, in solving the inverse problem for
mass-spring systems mentioned above, \cite{Ram} provides necessary and
sufficient conditions for two point sets to be the spectra of two
finite Jacobi matrices corresponding to two mass-spring systems, one
of which has a mass and a spring modified.  The results of \cite{Ram}
are related to the study of microcantilevers
\cite{spletzer-et-al1,spletzer-et-al2}, which are modeled by a
spring-mass system whose masses and springs constants correspond to
the mechanical parameters of the system. The inverse problem treated
in \cite{Ram} could be used as a theoretical framework for the problem
of measuring micromasses with a help of microcantilevers
\cite{spletzer-et-al1,spletzer-et-al2}.

Let us consider a semi-infinite spring-mass system with masses
$\{m_j\}_{j=1}^\infty$ and spring constants $\{k_j\}_{j=1}^\infty$ as
in Fig.~\ref{fig:1}.
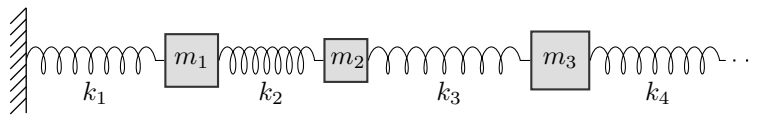
\begin{figure}[h]
\begin{center}
\begin{tikzpicture}
  [mass1/.style={rectangle,draw=black!80,fill=black!13,thick,inner sep=0pt,
   minimum size=7mm},
   mass2/.style={rectangle,draw=black!80,fill=black!13,thick,inner sep=0pt,
   minimum size=5.7mm},
   mass3/.style={rectangle,draw=black!80,fill=black!13,thick,inner sep=0pt,
   minimum size=7.7mm},
   wall/.style={postaction={draw,decorate,decoration={border,angle=-45,
   amplitude=0.3cm,segment length=1.5mm}}}]
  \node (mass3) at (7.1,1) [mass3] {\footnotesize$m_3$};
  \node (mass2) at (4.25,1) [mass2] {\footnotesize$\,m_2$};
  \node (mass1) at (2.2,1) [mass1] {\footnotesize$m_1$};
\draw[decorate,decoration={coil,aspect=0.4,segment
  length=2.1mm,amplitude=1.8mm}] (0,1) -- node[below=4pt]
{\footnotesize$k_1$} (mass1);
\draw[decorate,decoration={coil,aspect=0.4,segment
  length=1.5mm,amplitude=1.8mm}] (mass1) -- node[below=4pt]
{\footnotesize$k_2$} (mass2);
\draw[decorate,decoration={coil,aspect=0.4,segment
  length=2.5mm,amplitude=1.8mm}] (mass2) -- node[below=4pt]
{\footnotesize$k_3$} (mass3);
\draw[decorate,decoration={coil,aspect=0.4,segment
  length=2.1mm,amplitude=1.8mm}] (mass3) -- node[below=4pt]
{\footnotesize$k_4$} (9.3,1);
\draw[line width=.8pt,loosely dotted] (9.4,1) -- (9.8,1);
\draw[line width=.5pt,wall](0,1.7)--(0,0.3);
\end{tikzpicture}
\end{center}
\caption{Semi-infinite mass-spring system}\label{fig:1}
\end{figure}
By a standard reasoning (see
\cite{MR2102477,marchenko-new,mono-marchenko}) one verifies that the
infinite system of Fig.~\ref{fig:1} is modeled by the spectral
properties of the Jacobi operator $J$ with
\begin{equation}
\label{eq:spring-mass}
q_j = -\frac{k_{j+1}+k_j}{m_j}\,, \qquad
b_j=\frac{k_{j+1}}{\sqrt{m_j m_{j+1}}}\,,
\qquad j\in\nats\,.
\end{equation}
We remark that in \cite{MR2102477,marchenko-new,mono-marchenko} the obtained
matrix corresponds to $-J$. An alternative physical interpretation is
provided by a one dimensional harmonic crystal \cite[Sec. 1.5]{MR1711536}.

In this work we consider the spectrum of $J^{(g)}$ to be discrete (if
$J\ne J^*$ this is always the case). Below, in Remarks
\ref{rem:discrete-spectrum} and \ref{rem:nonselfadjoint} we comment on
matrices of the form (\ref{eq:jm-0}) whose corresponding operator
$J^{(g)}$ has discrete spectrum.

The discreteness of $\sigma(J^{(g)})$ implies that the movement of our
mechanical system is a superposition of harmonic oscillations whose
frequencies are the square roots of the modules of the eigenvalues.

Along with the self-adjoint operator $J^{(g)}$ we consider the family
of operators $J^{(g)}(\theta)$ ($\theta>0$) being self-adjoint
extensions of the Jacobi operator whose matrix representation with
respect to the canonical basis in $l_2(\nats)$ is
\begin{equation}
  \label{eq:jm-theta}
    \begin{pmatrix}
    \theta^2q_1 & \theta b_1 & 0  &  0  &  \cdots
\\[1mm] \theta b_1 & q_2 & b_2 & 0 & \cdots \\[1mm]  0  &  b_2  & q_3  &
b_3 &  \\
0 & 0 & b_3 & q_4 & \ddots\\ \vdots & \vdots &  & \ddots
& \ddots
  \end{pmatrix}\,.
\end{equation}
$J^{(g)}(\theta)$ ($\theta>0$) will be the family of perturbed Jacobi
operators. Note that the operators of the family are not obtained from
each other by a rank-one perturbation (see (\ref{eq:rank-two}) below).

Going from $J^{(g)}$ to $J^{(g)}(\theta)$ corresponds to changing the
first mass by $\Delta m=m_1(\theta^{-2}-1)$. In other words,
$\theta^2$ is the ratio of the original mass $m_1$ to the new mass
$m_1+\Delta m$. This is illustrated in Fig.~\ref{fig:2}. It is worth
mentioning that we also consider here the cases when $\Delta m<0$,
equivalently, $\theta>1$, although physical applications correspond to
$\theta<1$ \cite{spletzer-et-al1,spletzer-et-al2}.
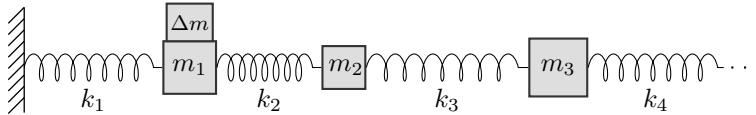
\begin{figure}[h]
\begin{center}
\begin{tikzpicture}
  [mass1/.style={rectangle,draw=black!80,fill=black!13,thick,inner sep=0pt,
   minimum size=7mm},
   mass2/.style={rectangle,draw=black!80,fill=black!13,thick,inner sep=0pt,
   minimum size=5.7mm},
   mass3/.style={rectangle,draw=black!80,fill=black!13,thick,inner sep=0pt,
   minimum size=7.7mm},
   dmass/.style={rectangle,draw=black!80,fill=black!13,thick,inner sep=0pt,
   minimum size=5mm},
   wall/.style={postaction={draw,decorate,decoration={border,angle=-45,
   amplitude=0.3cm,segment length=1.5mm}}}]
  \node (mass3) at (7.1,1) [mass3] {\footnotesize$m_3$};
  \node (mass2) at (4.25,1) [mass2] {\footnotesize$\,m_2$};
  \node (mass1) at (2.2,1) [mass1] {\footnotesize$m_1$};
  \node (dmass) at (2.2,1.6) [dmass] {\scriptsize$\,\Delta m\,$};
\draw[decorate,decoration={coil,aspect=0.4,segment
  length=2.1mm,amplitude=1.8mm}] (0,1) -- node[below=4pt]
{\footnotesize$k_1$}  (mass1);
\draw[decorate,decoration={coil,aspect=0.4,segment
  length=1.5mm,amplitude=1.8mm}] (mass1) -- node[below=4pt]
{\footnotesize$k_2$} (mass2);
\draw[decorate,decoration={coil,aspect=0.4,segment
  length=2.5mm,amplitude=1.8mm}] (mass2) -- node[below=4pt]
{\footnotesize$k_3$} (mass3);
\draw[decorate,decoration={coil,aspect=0.4,segment
  length=2.1mm,amplitude=1.8mm}] (mass3) -- node[below=4pt]
{\footnotesize$k_4$} (9.3,1);
\draw[line width=.8pt,loosely dotted] (9.4,1) -- (9.8,1);
\draw[line width=.5pt,wall](0,1.8)--(0,0.4);
\end{tikzpicture}
\end{center}
\caption{Perturbed semi-infinite mass-spring system}\label{fig:2}
\end{figure}

The problem of reconstructing the initial and the perturbed matrices
by their spectra can be then interpreted from the physical point of
view as the problem of finding the mechanical parameters of the
spring-mass system from the frequencies of its oscillations before
and after the modification.

We emphasize that, although the operators and the particular kind of
perturbation considered here were motivated by a physical system, the
general mathematical setting is consider throughout the work. Thus,
the entries in (\ref{eq:jm-0}) have no restriction other than $J$
being a Jacobi operator ($q_n\in\reals$, $b_n>0$) and $J^{(g)}$ having
discrete spectrum (see Remarks \ref{rem:discrete-spectrum},
\ref{rem:nonselfadjoint}). Note that $J$ is then not necessary
semibounded though it actually is when $J$ corresponds to a
mass-spring system.

This work is organized as follows. In Section~\ref{sec:preliminaries}
we lay down the notation, introduce the Jacobi operators and its
perturbations, and present some preparatory facts related with the
inverse spectral problems of such
operators. Section~\ref{sec:direct-problem} gives an account of the
spectral properties of the family of perturbed Jacobi operators
$J^{(g)}(\theta)$. The problem of reconstruction is treated in
Section~\ref{sec:reconstruction}. This section gives some necessary
conditions for the spectra of $J^{(g)}(\theta)$, provides an algorithm
for reconstruction of the matrix and establishes uniqueness of the
reconstruction.  Finally, Section~\ref{sec:nec-suf} gives necessary
and sufficient conditions for two sequences of real numbers to be the
spectra of $J^{(g)}$ and its perturbation $J^{(g)}(\theta)$
($\theta\ne 1$).

\section{Preliminaries}
\label{sec:preliminaries}
Let $\Upsilon$ be a second order symmetric difference
expression such that for any sequence $f=\{f_k\}_{k=1}^\infty$
\begin{align}
  \label{eq:initial-spectral}
 (\Upsilon f)_1&:= q_1 f_1 + b_1 f_2\,,\\
  \label{eq:recurrence-spectral}
  (\Upsilon f)_k&:= b_{k-1}f_{k-1} + q_k f_k + b_kf_{k+1}\,,
  \quad k \in \mathbb{N} \setminus \{1\},
\end{align}
where, for $n\in\mathbb{N}$, $b_n$ is positive and $q_n$ is real.  Let
$l_{\rm{fin}}(\mathbb{N})$ be the linear space of complex sequences with a
finite number of non-zero elements. In the Hilbert space
$l_2(\mathbb{N})$, let us consider the operator whose domain is
$l_{\rm{fin}}(\mathbb{N})$ and acts as the expression $\Upsilon$. This
operator is symmetric since it is densely defined and
Hermitian, and thus it is closable. Now, let $J$ be the closure of
this operator.

We have defined the operator $J$ so that the semi-infinite Jacobi
matrix (\ref{eq:jm-0}) is its matrix representation with respect to
the canonical basis $\{\delta_n\}_{n=1}^\infty$ in $l_2(\mathbb{N})$
(see \cite[Sec. 47]{MR1255973} for the definition of the matrix
representation of an unbounded symmetric operator). Indeed, $J$ is the
minimal closed symmetric operator satisfying
\begin{equation*}
\begin{array}{l}
\inner{\delta_n}{J\delta_n}=q_n\,,\quad
\inner{\delta_{n+1}}{J\delta_n}=\inner{\delta_n}{J\delta_{n+1}}=b_n\,,\\
\inner{J\delta_n}{\delta_{n+k}}
=\inner{\delta_{n}}{J\delta_{n+k}}=0\,,
\end{array}
\quad n\in\mathbb{N}\,,\,k\in\mathbb{N}\setminus\{1\}\,.
\end{equation*}
We shall refer to $J$ as the \emph{Jacobi operator} and to
(\ref{eq:jm-0}) as its associated matrix.

The operator $J^*$ turns out to be given by
\begin{equation*}
  \dom(J^*)=\{f\in l_2(\nats):\Upsilon f\in l_2(\nats)\},
  \qquad J^*f=\Upsilon f\,,
\end{equation*}
which follows directly from the definition of $J$ \cite[Chap.\,4
Sec.\,1.1]{MR0184042}, \cite[Thm.\,2.7]{MR1627806}.

If one gives the complex number $f_1$, the solution of
the difference equation,
\begin{equation*}
(\Upsilon f)= \zeta f\,,
\qquad \zeta \in \mathbb{C}\,,
\end{equation*}
is uniquely determined from (\ref{eq:initial-spectral}) and
(\ref{eq:recurrence-spectral}) by recurrence. For the elements of this
solution when $f_1=1$, the following notation is standard
\cite[Chap. 1, Sec. 2.1]{MR0184042}
\begin{equation*}
  P_{k-1}(\zeta):=f_k\,,\qquad k\in\mathbb{N}\,,
\end{equation*}
where the polynomial $P_k(\zeta)$ (of degree $k$) is referred
to as the $k$-th orthogonal polynomial of the first kind
associated with the matrix (\ref{eq:jm-0}). Now, let us
solve the difference equation
\begin{equation*}
  (\Upsilon f)_k= \zeta f_k, \qquad k\in \mathbb{N} \setminus \{1\}
\end{equation*}
under the assumption that $f_1=0$ and $f_2=b_1^{-1}$, and define
\begin{equation*}
  Q_{k-1}(\zeta):=f_k\,,\qquad k\in\mathbb{N}\,.
\end{equation*}
$Q_k(\zeta)$ is a polynomial of degree $k-1$ and it is called the $k$-th
orthogonal polynomial of the second kind associated with the matrix
(\ref{eq:jm-0}).

The sequence $P(\zeta):=\{P_{k-1}(\zeta)\}_{k=1}^\infty$ is not in
$l_{\rm{fin}}(\mathbb{N})$, but it may happen that
\begin{equation}
 \label{eq:generalized-eigenvector}
  \sum_{k=0}^\infty\abs{P_k(\zeta)}^2<\infty\,,
\end{equation}
in which case $P(\zeta)\in\ker(J^*-\zeta I)$. Since $J$ is symmetric,
if the series in (\ref{eq:generalized-eigenvector}) is convergent for
one $\zeta$ in the upper half plane $\complex_+$ (the lower half plane
$\complex_-$), then it is convergent in all $\complex_+$
($\complex_-$). Actually, because of the reality of the coefficients
of $P_{k-1}(\zeta)$ for all $k\in\nats$, the series in
(\ref{eq:generalized-eigenvector}) will then be convergent in all
$\complex\setminus\reals$ and $J$ has deficiency indices $(1,1)$. When
the series in (\ref{eq:generalized-eigenvector}) is divergent for one
$\zeta$ in $\complex\setminus\reals$, $J$ has deficiency indices
$(0,0)$ and the operator is self-adjoint since $J$ is closed. There
are known conditions on the matrix (\ref{eq:jm-0}) which guarantee
that $J$ is self-adjoint \cite[Addenda\,1]{MR0184042},
\cite[Chap.\,7,\,Thms.\,1.2--1.4]{MR0222718}.

We now introduce the operators that will be at the center of our
considerations in this work.
\begin{definition}
  \label{def:s-a-ext-def}
Let the operator $J^{(g)}$ be defined as follows:
\begin{enumerate}[a)]
\item In case $J\ne J^*$, define the sequence
$v(g)=\{v_k(g)\}_{k=1}^\infty$ such that $\forall k\in\mathbb{N}$
\begin{equation*}
  v_k(g):=P_{k-1}(0)+gQ_{k-1}(0)\,,
  \quad g\in\mathbb{R}
\end{equation*}
and
\begin{equation*}
  v_k(\infty):=Q_{k-1}(0)\,.
\end{equation*}
Let $J^{(g)}$ be the restriction of
$J^*$ to the set
\begin{equation*}
  \left\{f=\{f_k\}_{k\in\nats}\in\dom(J^*):
\lim_{k\to\infty}b_k(v_k(g)f_{k+1}-f_kv_{k+1}(g))=0\right\}\,.
\end{equation*}
When $g\in\reals\cup\{\infty\}$, $J^{(g)}$ runs over all self-adjoint
extensions of $J$. Moreover, different values of $g$ imply different
self-adjoint extensions \cite[Lemma 2.20]{MR1711536}.
\item In case $J=J^*$, let us define $J^{(g)}:=J$ for all
$g\in\reals\cup\{\infty\}$.
\end{enumerate}
\end{definition}
Alongside the operator $J^{(g)}$ we consider the operators $J_n^{(g)}$
($n\in\nats$) in the Hilbert space
$l_2(\nats)\ominus\Span\{\delta_1,\dots,\delta_n\}$ defined by
restricting $J^{(g)}$ to
$l_2(\nats)\ominus\Span\{\delta_1,\dots,\delta_n\}$. Thus, $J_n^{(g)}$
is a self-adjoint extension of the Jacobi operator whose associated
matrix is (\ref{eq:jm-0}) with the first $n$ columns and $n$ rows
removed.

Finally we introduce the perturbed operators $J^{(g)}(\theta)$. They
are defined as follows. Consider $J^{(g)}$ with fixed
$g\in\reals\cup\{\infty\}$ and take any $\theta>0$. Then
\begin{equation}
 \label{eq:rank-two}
  J^{(g)}(\theta):=J^{(g)} +
  q_1(\theta^2-1)\inner{\delta_1}{\cdot}\delta_1 +
  b_1(\theta-1)(\inner{\delta_1}{\cdot}\delta_2 +
  \inner{\delta_2}{\cdot}\delta_1)\,,
\end{equation}
where we take the inner product to be antilinear in its first
argument. By this definition $J^{(g)}(\theta)$ is a self-adjoint
extension of the Jacobi operator whose associated matrix is
(\ref{eq:jm-theta}). Note that $J^{(g)}(\theta)$ is a finite-rank
perturbation of  $J^{(g)}$ and thus $\dom(J^{(g)})=\dom(J^{(g)}(\theta))$.

Fix $g\in\reals\cup\{\infty\}$ and take the
resolution of the identity $E^{(g)}(t)$ of $J^{(g)}$, so
\begin{equation*}
  J^{(g)}=\int_\mathbb{R}tdE^{(g)}(t)\,.
\end{equation*}
Since $J^{(g)}$ is simple \cite[Sec.\,2.2\,Chap.\,4]{MR0184042}, it is
particularly useful to consider
the function
\begin{equation}
  \label{eq:spectral-measure}
  \rho^{(g)}(t):=\inner{\delta_1}{E^{(g)}(t)\delta_1}\,,\quad t\in\reals\,.
\end{equation}
It turns out that all the moments of the measure generated by $\rho^{(g)}$
are finite \cite[Thm.\,4.1.3]{MR0184042}, that is,
\begin{equation}
  \label{eq:moments}
  s_k=\int_\reals t^kd\rho^{(g)}(t)<\infty\qquad\forall k\in\nats\cup\{0\}\,,
\end{equation}
and the polynomials are
dense in $L_2(\reals,d\rho^{(g)})$ \cite[Thms.\,2.3.2,\,4.1.4]{MR0184042},
\cite[Prop.\,4.15]{MR1627806}.

In this work we also make use of the so-called Weyl $m$-function
\begin{equation}
  \label{eq:m-function}
  m^{(g)}(\zeta):=\inner{\delta_1}{(J^{(g)}-\zeta I)^{-1}
\delta_1}\,,\quad\zeta\not\in\sigma(J^{(g)})\,.
\end{equation}
The functions (\ref{eq:spectral-measure}) and (\ref{eq:m-function})
are related by the Borel transform, viz.,
\begin{equation*}
    m^{(g)}(\zeta) =\int_{\mathbb{R}}\frac{d\rho^{(g)}(t)}{t-\zeta}\,,
\end{equation*}
so $m^{(g)}$ is a Herglotz function, i.\,e.,
\begin{equation*}
  \frac{\im m^{(g)}(\zeta)}{\im \zeta}>0\,,\qquad\im\zeta>0\,.
\end{equation*}
Using the von Neumann expansion for the resolvent
(cf.\cite[Chap.\,6,\,Sec.\,6.1]{MR1711536})
\begin{equation*}
  (J^{(g)}-\zeta I)^{-1}=
  -\sum_{k=0}^{N-1}\frac{(J^{(g)})^k}{\zeta^{k+1}}
  +\frac{(J^{(g)})^N}{\zeta^{N}}
  (J^{(g)}-\zeta I)^{-1}\,,
\end{equation*}
where $\zeta\in\mathbb{C}\setminus\sigma(J^{(g)})$,
one can easily obtain the following asymptotic formula
\begin{equation}
  \label{eq:m-asympt}
  m^{(g)}(\zeta)=-\frac{1}{\zeta}-\frac{q_1}{\zeta^2}
  -\frac{b_1^2+q_1^2}{\zeta^3}
  +O(\zeta^{-4})\,,
\end{equation}
as $\zeta\to\infty$ ($\im \zeta\ge \epsilon$, $\epsilon>0$).

The inverse Stieltjes transform allows to recover the spectral
function (\ref{eq:spectral-measure}) from its corresponding Weyl
$m$-function (\ref{eq:m-function}). So they are in one-to-one
correspondence. Furthermore, either (\ref{eq:spectral-measure}) or
(\ref{eq:m-function}) uniquely determines the Jacobi operator
$J^{(g)}$, i.\,e., the matrix (\ref{eq:jm-0}) and the parameter $g$ in
the non-self-adjoint case.  Indeed, there are two general methods for
recovering the matrix (\ref{eq:jm-0}) that work without any assumption
on the spectrum. One method, developed in \cite{MR1616422} (see also
\cite{MR1643529}), makes use of the asymptotic behavior of the Weyl
$m$-function and the Riccati equation \cite[Eq.\,2.15]{MR1616422},
\cite[Eq.\,2.23]{MR1643529},
\begin{equation}
  \label{eq:riccati}
    b_n^2 m_n^{(g)}(\zeta)=
    q_n-\zeta-\frac{1}{m_{n-1}^{(g)}(\zeta)}\,,\quad n\in\mathbb{N}\,,
\end{equation}
where $m_n^{(g)}(\zeta)$ is the Weyl $m$-function of the
Jacobi operator $J_n^{(g)}$ ($m_0=m$).

The other method of reconstruction (see
\cite[Chap.\,7,\,Sec.\,1.5]{MR0222718} and, particularly,
\cite[Chap.\,7,\,Thm.\,1.11]{MR0222718}) has its starting point in the
sequence $\{t^k\}_{k=0}^\infty$, $t\in\mathbb{R}$. From
(\ref{eq:moments}) all the elements of the sequence
$\{t^k\}_{k=0}^\infty$ are in $L_2(\mathbb{R},d\rho^{(g)})$ and one
can apply, in this Hilbert space, the Gram-Schmidt procedure of
orthonormalization to the sequence $\{t^k\}_{k=0}^\infty$.  One, thus,
obtains a sequence of polynomials $\{P_k(t)\}_{k=0}^\infty$ normalized
and orthogonal in $L_2(\mathbb{R},d\rho^{(g)})$. These polynomials
satisfy a three term recurrence equation
\cite[Chap.\,7,\,Sec.\,1.5]{MR0222718}, \cite[Sec.\,1]{MR1627806}
\begin{align}
\label{eq:favard-system1}
      tP_{k-1}(t) &= b_{k-1}P_{k-2}(t) +  q_k
      P_{k-1}(t) +  b_k P_k(t)\,,
\quad k \in \mathbb{N} \setminus \{1\}\,,\\
\label{eq:favard-system2}
 tP_0(t) &=  q_1 P_0(t) +  b_1 P_1(t)\,,
\end{align}
where all the coefficients $b_k$ ($k\in\mathbb{N}$) turn out to be
positive and $q_k$ ($k\in\mathbb{N}$) are real numbers.  The system
(\ref{eq:favard-system1}) and (\ref{eq:favard-system2}) defines a
Jacobi matrix which is the matrix representation of either
$J^{(g)}$ or a restriction of $J^{(g)}$ depending on whether $J=
J^*$ or not.

The function (\ref{eq:m-function}), equivalently
(\ref{eq:spectral-measure}), determines the parameter $g$
which defines the self-adjoint extension when the
reconstructed matrix turns out to be the matrix representation of a
non-self-adjoint operator. Indeed, consider a pole $\gamma$ of
$m^{(g)}$ (there is always one when $J\ne J^*$) and evaluate
$P_k(\gamma)$, $k\in\nats$. Then either
\begin{equation*}
  \lim_{k\to\infty}b_k(Q_{k-1}(0)P_{k}(\gamma)-P_{k-1}(\gamma)Q_{k-1}(0))=0,
\end{equation*}
which means that $g=\infty$, or
\begin{equation*}
  g=\frac{\lim_{k\to\infty}b_k(P_{k-1}(0)P_{k}(\gamma)-P_{k-1}(\gamma)P_{k-1}(0))}
  {\lim_{k\to\infty}b_k(Q_{k-1}(0)P_{k}(\gamma)-P_{k-1}(\gamma)Q_{k-1}(0))}\,.
\end{equation*}
The details of this recipe are explained for instance in
\cite[Sec.\,2]{weder-silva}.

Since any simple self-adjoint operator in an infinite dimensional
Hilbert space is unitarily equivalent to some operator $J=J^*$
\cite[Thm.\,4.2.3]{MR0184042}, \cite[Sec.\,69]{MR1255973}, in the case
$J=J^*$, $\sigma(J^{(g)})$ may be any non-empty closed infinite set in
$\reals$. In particular $J^{(g)}$ may have discrete spectrum, that is,
$\sigma_{ess}(J^{(g)})=\emptyset$. When $J\ne J^*$, this is always the
case, that is all self-adjoint extensions $J^{(g)}$ of the
non-self-adjoint operator $J$ have discrete spectrum
\cite[Lem.\,2.19]{MR1711536}.

Assume that $J$ has discrete spectrum (this always happen if $J\ne
J^*$), so the spectrum is a sequence of real numbers,
$\{\lambda_k\}_k$, without finite points of accumulation.
The simplicity of $J^{(g)}$ implies
that all eigenvalues are of multiplicity one. In this case the
function $\rho^{(g)}(t)$, defined by (\ref{eq:spectral-measure}), can be
written as follows
\begin{equation}
  \label{eq:rho-discrete}
   \rho^{(g)}(t)=\sum_{\lambda_k< t}\frac{1}{\alpha_k}\,,
\end{equation}
where the coefficients $\{\alpha_k\}_k$ are called the normalizing
constants and according to \cite[Chap.\,7,\,Thm.\,1.17]{MR0222718} are
given by
\begin{equation}
  \label{eq:def-normalizing}
  \alpha_n=\sum_{k=0}^\infty\abs{P_k(\lambda_n)}^2\,.
\end{equation}
Thus, from (\ref{eq:rho-discrete}) and (\ref{eq:m-function}) one has
that
\begin{equation}
  \label{eq:m-discrete}
  m^{(g)}(\zeta)=\sum_{k}\frac{1}{\alpha_k(\lambda_k-\zeta)}\,.
\end{equation}
\begin{remark}
  \label{rem:m-zeros-poles}
  In the case of discrete spectrum, the set of poles of the
  meromorphic Weyl $m$-function coincides with $\sigma(J^{(g)})$. By
  (\ref{eq:riccati}), the set of zeros coincides with
  $\sigma(J_1^{(g)})$. The zeros and poles of the Weyl $m$-function
  are simple and interlace as occurred to any meromorphic Herglotz
  function. Interlacing means that between two contiguous poles there
  is exactly one zero and between two contiguous zeros there is
  exactly one pole (see the proof of \cite[Chap.\,7,\,Thm.\,1]{MR589888}).
\end{remark}
\begin{remark}
  \label{rem:discrete-simple}
  By elementary perturbation theory (Weyl theorem), $J^{(g)}$ has
  discrete spectrum if and only if $J^{(g)}(\theta)$ has discrete
  spectrum. Note that $J^{(g)}(\theta)$ has simple spectrum since it
  is a self-adjoint extension of a Jacobi operator.
\end{remark}
\begin{remark}
  \label{rem:discrete-spectrum}
  Let us comment briefly on the criteria for discreteness of
  $\sigma(J^{(g)})$ on the basis of the matrix entries in
  (\ref{eq:jm-0}) when $J=J^*$.  Consider a matrix whose main diagonal
  is a sequence $\{q_k\}_{k=1}^\infty$ of pairwise distinct real
  numbers without finite accumulation points and the sequence defining
  the off-diagonals $\{b_k\}_{k=1}^\infty$ is such that $b_k=o(q_k)$
  as $k\to\infty$. Then, it can be shown that $J$ is the sum of the
  operator $D$ whose matrix representation is
  $\diag\{q_k\}_{k=1}^\infty$ and a perturbation relatively compact
  with respect to $D$. By perturbation theory, $J$ is thus self-adjoint
  and has discrete spectrum. Of course there are other examples of
  self-adjoint Jacobi operators having discrete spectrum and whose
  matrix representation diagonals do not satisfy the conditions just
  given (see for instance \cite{naboko,toloza-growing-weights})
\end{remark}
\begin{remark}
  \label{rem:nonselfadjoint}
  There are conditions on the entries of (\ref{eq:jm-0}) which
  guarantee that $J\ne J^*$ (see for instance
  \cite[Addenda\,1]{MR0184042} and
  \cite[Thm.\,7.1.5]{MR0222718}). Thus, for (\ref{eq:jm-0}) satisfying
  those conditions, $J^{(g)}$ has discrete spectrum
  \cite[Lem.\,2.19]{MR1711536}.
\end{remark}
\begin{remark}
  \label{rem:discrete-mass-spring}
  Consider the mass-spring system of the Introduction. On the basis of
  Remarks \ref{rem:discrete-spectrum}, \ref{rem:nonselfadjoint}, and
  by means of the recurrence equations given below in Remark
  \ref{rem:inverse-mass-spring}, one could construct a mass-spring
  system whose corresponding operator $J^{(g)}$ has discrete spectrum.
\end{remark}

\section{Direct spectral analysis of $J^{(g)}$ and $J^{(g)}(\theta)$}
\label{sec:direct-problem}
We begin this section by noting that
\begin{equation*}
  J_1^{(g)}=J_1^{(g)}(\theta)\,, \qquad \forall\theta>0\,.
\end{equation*}
Fix $g\in\reals\cup\{\infty\}$ and consider the Weyl $m$-functions
$m^{(g)}$, $m^{(g,\theta)}$ of the operators $J^{(g)}$ and
$J^{(g)}(\theta)$. Therefore, taking into account that $m_1^{(g)}$ and
$m_1^{(g,\theta)}$ coincide, (\ref{eq:riccati}) implies that
\begin{equation}
  \label{eq:aux-m-m-theta}
  \theta^2\left(\zeta+\frac{1}{m^{(g)}(\zeta)}\right)=
\zeta+\frac{1}{m^{(g,\theta)}(\zeta)}\,,
\end{equation}
Let us now
consider the function
\begin{equation}
  \label{eq:m-goth-def}
  \mathfrak{m}(\zeta):=\frac{m^{(g)}(\zeta)}{m^{(g,\theta)}(\zeta)}\,.
\end{equation}
\begin{remark}
  \label{rem:zeros-poles}
  In view of Remark~\ref{rem:discrete-simple}, if $J^{(g)}$ has
  discrete spectrum, the function $\mathfrak{m}$ is meromorphic by
  (\ref{eq:m-goth-def}). Since the zeros of $m^{(g)}$ and
  $m^{(g,\theta)}$ are the same (see
  Remark~\ref{rem:m-zeros-poles}), it follows that for all $\theta>0$
  the set of poles of $\mathfrak{m}$ is a subset of $\sigma(J^{(g)})$,
  while $\sigma(J^{(g)}(\theta))$ contains all the zeros of
  $\mathfrak{m}$. Observe also that, from (\ref{eq:aux-m-m-theta}),
  $0\in\sigma(J^{(g)})$ if and only if
  $0\in\sigma(J^{(g)}(\theta))$. Moreover, whenever $\theta\ne 1$,
  (\ref{eq:aux-m-m-theta}) implies that the sets $\sigma(J^{(g)})$ and
  $\sigma(J^{(g)}(\theta))$ can intersect only at $0$.
\end{remark}
\begin{remark}
  \label{rem:continuous}
  By \cite[Chap.\,7,\,Thm.\,3.9]{MR0407617} the zeros
  of $\mathfrak{m}$ are analytic functions of the parameter
  $\theta$. The same is true for the eigenvectors of $J^{(g)}(\theta)$.
\end{remark}
\begin{proposition}
  \label{prop:derivative}
  Let $J^{(g)}$ have discrete spectrum and let $\{\lambda_k(\theta)\}_k$
  be the set of eigenvalues of $J^{(g)}(\theta)$ ($\theta>0$). For a fixed
  $k$ the following holds
\begin{equation*}
  \frac{d}{d\theta}\lambda_k(\theta)=
\frac{2\lambda_k(\theta)}{\theta\alpha_k(\theta)}\,,
\end{equation*}
where $\alpha_k(\theta)$ is the normalizing constant corresponding
to $\lambda_k(\theta)$.
\end{proposition}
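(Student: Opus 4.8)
The plan is to apply the Feynman--Hellmann formula to the analytic family $J^{(g)}(\theta)$. By Remark~\ref{rem:continuous} each eigenvalue $\lambda_k(\theta)$ and a corresponding unit eigenvector $\psi_k(\theta)$ depend analytically on $\theta$, and since the spectrum is simple (Remark~\ref{rem:discrete-simple}) there is no ambiguity in this choice. Writing $\lambda_k(\theta)=\inner{\psi_k(\theta)}{J^{(g)}(\theta)\psi_k(\theta)}$ and differentiating, the two terms containing $\tfrac{d}{d\theta}\psi_k(\theta)$ combine into $\lambda_k(\theta)\tfrac{d}{d\theta}\norm{\psi_k(\theta)}^2=0$, because $J^{(g)}(\theta)$ is self-adjoint and $\psi_k(\theta)$ is an eigenvector of unit norm. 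Hence
\begin{equation*}
  \frac{d}{d\theta}\lambda_k(\theta)
  =\inner{\psi_k(\theta)}{\Bigl(\tfrac{d}{d\theta}J^{(g)}(\theta)\Bigr)\psi_k(\theta)}\,.
\end{equation*}

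Next I would differentiate the explicit rank-two expression (\ref{eq:rank-two}), obtaining the bounded operator
\begin{equation*}
  \frac{d}{d\theta}J^{(g)}(\theta)
  =2\theta q_1\inner{\delta_1}{\cdot}\delta_1
   +b_1\bigl(\inner{\delta_1}{\cdot}\delta_2+\inner{\delta_2}{\cdot}\delta_1\bigr)\,.
\end{equation*}
Denoting the (real) components of $\psi_k(\theta)$ by $u_1,u_2,\dots$, the previous display reduces to $\frac{d}{d\theta}\lambda_k(\theta)=2\theta q_1u_1^2+2b_1u_1u_2$, so everything comes down to evaluating $u_1$ and $u_2$.

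To do this I would use that $\psi_k(\theta)$ is the $l_2$ eigenvector of a self-adjoint extension of the Jacobi operator associated with (\ref{eq:jm-theta}), hence proportional to the sequence of first-kind orthogonal polynomials of that matrix evaluated at $\lambda_k(\theta)$. The first two such polynomials are $1$ and $(\zeta-\theta^2q_1)/(\theta b_1)$ (read off from the first row of (\ref{eq:jm-theta})), so after normalization $u_1=\alpha_k(\theta)^{-1/2}$ and $u_2=(\lambda_k(\theta)-\theta^2q_1)\,\theta^{-1}b_1^{-1}\alpha_k(\theta)^{-1/2}$, where $\alpha_k(\theta)$ is the normalizing constant (\ref{eq:def-normalizing}) of the matrix (\ref{eq:jm-theta}). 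Substituting these into $2\theta q_1u_1^2+2b_1u_1u_2$, the terms proportional to $q_1$ cancel and one is left with $2\lambda_k(\theta)/(\theta\alpha_k(\theta))$, as claimed.

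The main obstacle is to make the Feynman--Hellmann step rigorous rather than merely formal: one must verify that $\tfrac{d}{d\theta}\psi_k(\theta)$ exists as a strong limit and lies in the common domain $\dom(J^{(g)})=\dom(J^{(g)}(\theta))$, so that the cancellation $\inner{\psi_k}{J^{(g)}(\theta)\tfrac{d}{d\theta}\psi_k}=\lambda_k\inner{\psi_k}{\tfrac{d}{d\theta}\psi_k}$ is legitimate through self-adjointness. This is supplied by analytic perturbation theory: the perturbation in (\ref{eq:rank-two}) is bounded and analytic in $\theta$, so the eigenprojection and eigenvector are analytic in the graph norm, keeping $\tfrac{d}{d\theta}\psi_k(\theta)$ inside the fixed domain. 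Once this is in place, the remaining manipulations are the routine algebra indicated above.
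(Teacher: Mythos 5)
Your proof is correct and follows essentially the same route as the paper's: both are Feynman--Hellmann computations that use the explicit rank-two form of the $\theta$-derivative of $J^{(g)}(\theta)$ together with the identification of the first two eigenvector components with the orthogonal polynomials $1$ and $(\lambda_k(\theta)-\theta^2q_1)/(\theta b_1)$ of the matrix (\ref{eq:jm-theta}), and both invoke Remark~\ref{rem:continuous} for analytic dependence on $\theta$. The only real difference is in execution: the paper normalizes the eigenvector by $\inner{\delta_1}{f(\theta)}=1$ and works with the exact difference-quotient identity $(\lambda_k(\theta+\tau)-\lambda_k(\theta))\inner{f(\theta)}{f(\theta+\tau)}=\inner{f(\theta)}{(J^{(g)}(\theta+\tau)-J^{(g)}(\theta))f(\theta+\tau)}$, so that $\alpha_k(\theta)=\norm{f(\theta)}^2$ emerges in the limit and the derivative of the eigenvector never has to be taken --- precisely the step you correctly flag as the one requiring Kato's analytic perturbation theory in your version.
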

\begin{proof}
  Let us denote by $f(\theta)$ the eigenvector of $J^{(g)}(\theta)$
  corresponding to $\lambda_k(\theta)$. We assume that $f(\theta)$
  is normalized in such a way that
  \begin{equation}
    \label{eq:eigenvalue-1}
    \inner{\delta_1}{f(\theta)}=1\,.
  \end{equation}
  Pick any small real $\tau$ (it suffices that
  $\abs{\tau}<\theta$). Then, taking into account that $\dom
  (J^{(g)})=\dom (J^{(g)}(\theta))$ and the self-adjointness of
  $J^{(g)}(\theta)$ for any $\theta>0$, we have that
  \begin{align*}
    (\lambda_k(\theta+\tau)-\lambda_k(\theta))
    \inner{f(\theta)}{f(\theta+\tau)}&=
    \inner{f(\theta)}{J^{(g)}(\theta+\tau)f(\theta+\tau)}\\
    &- \inner{J^{(g)}(\theta)f(\theta)}{f(\theta+\tau)}\\
    &=\inner{f(\theta)}
  {(J^{(g)}(\theta+\tau)
    -J^{(g)}(\theta)+J^{(g)}(\theta))f(\theta+\tau)}\\
  &-\inner{J^{(g)}(\theta)f(\theta)}{f(\theta+\tau)}\\
  &=\inner{f(\theta)}
  {(J^{(g)}(\theta+\tau)-J^{(g)}(\theta))f(\theta+\tau)}\,.
  \end{align*}
From (\ref{eq:eigenvalue-1}) it follows that the
entries $f(\theta+\tau)$ and $f(\theta)$ are the
polynomials of the first kind associated to the matrix of
$J^{(g)}(\theta+\tau)$ and $J^{(g)}(\theta)$, so
\begin{equation*}
  f_2(\theta+\tau)=\frac{\lambda_k(\theta+\tau)-(\theta+\tau)^2q_1}
  {(\theta+\tau)b_1}\,,\qquad
  f_2(\theta)=\frac{\lambda_k(\theta)-\theta^2q_1}{\theta b_1}\,,
\end{equation*}
Now,  taking into account these last equalities and
(\ref{eq:eigenvalue-1}), together with
 \begin{equation*}
   J^{(g)}(\theta+\tau)-J^{(g)}(\theta)=
    \begin{pmatrix}
    (2\theta\tau+\tau^2)q_1 & \tau b_1 & 0  &  0  &  \cdots
\\[1mm] \tau b_1 & 0 & 0 & 0 & \cdots \\[1mm]  0  &  0  & 0  &
0 &  \\
0 & 0 & 0 & 0 & \ddots\\ \vdots & \vdots &  & \ddots
& \ddots
  \end{pmatrix}\,,
 \end{equation*}
one obtains that
\begin{equation*}
  (\lambda_k(\theta+\tau)-\lambda_k(\theta))
    \inner{f(\theta)}{f(\theta+\tau)}=
    \tau\left(\frac{\lambda_k(\theta+\tau)}{\theta+\tau}+
\frac{\lambda_k(\theta)}{\theta}\right)
\end{equation*}
Therefore, on the basis of Remark~\ref{rem:continuous}, one has
\begin{equation*}
  \lim_{\tau\to
  0}\frac{\lambda_k(\theta+\tau)-\lambda_k(\theta)}{\tau}=
 \lim_{\tau\to
  0}\frac{1}{\inner{f(\theta)}{f(\theta+\tau)}}
\left(\frac{\lambda_k(\theta+\tau)}{\theta+\tau}+
\frac{\lambda_k(\theta)}{\theta}\right)=
  \frac{2\lambda_k(\theta)}{\theta\alpha_k(\theta)}\,.
\end{equation*}
\end{proof}
The proposition below can be proven by means of
Remark~\ref{rem:zeros-poles}, \ref{rem:continuous}, and
Proposition~\ref{prop:derivative}. However, we present an alternative
proof based on the following expression
\begin{equation}
  \label{eq:m-through-m}
  \mathfrak{m}(\zeta)=\zeta(\theta^2-1)m^{(g)}(\zeta)+\theta^2\,,
\end{equation}
which follows from (\ref{eq:aux-m-m-theta}) and (\ref{eq:m-goth-def}).
\begin{proposition}
 \label{prop:interlacing}
 Fix $g\in\reals\cup\{\infty\}$ and let $J^{(g)}$ have discrete
 spectrum.  The spectra $\sigma(J^{(g)})$, $\sigma(J^{(g)}(\theta))$
 interlace in $\mathbb{R}_+$ and $\mathbb{R}_-$. Moreover,
 $\sigma(J^{(g)}(\theta))$ in $\reals_+$ ($\reals_-$) is shifted with
 respect to $\sigma(J^{(g)})$ to the left (right) if $\theta<1$, and
 to the right (left) if $\theta>1$.
\end{proposition}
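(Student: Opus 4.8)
The plan is to reduce both spectra to the level sets of a single \emph{monotone} function and then read off interlacing and the direction of the shift geometrically. Starting from (\ref{eq:aux-m-m-theta}) I would introduce $\phi(\zeta):=\zeta+1/m^{(g)}(\zeta)$, so that (\ref{eq:aux-m-m-theta}) reads $\zeta+1/m^{(g,\theta)}(\zeta)=\theta^2\phi(\zeta)$ (this is just (\ref{eq:m-through-m}) in disguise). An eigenvalue of $J^{(g)}$ is a pole of $m^{(g)}$, i.e. a point where $1/m^{(g)}=0$, hence a solution of $\phi(\zeta)=\zeta$; likewise an eigenvalue of $J^{(g)}(\theta)$ is a point where $1/m^{(g,\theta)}=0$, i.e. a solution of $\theta^2\phi(\zeta)=\zeta$, that is $\phi(\zeta)=\zeta/\theta^2$. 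Thus $\sigma(J^{(g)})$ and $\sigma(J^{(g)}(\theta))$ are precisely the intersections of the graph of $\phi$ with the two lines $y=\zeta$ and $y=\zeta/\theta^2$ through the origin.

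The crucial point, and the one I expect to be the main obstacle, is to show that $\phi$ is genuinely decreasing between its poles: as written it is the sum of the increasing term $\zeta$ and the term $1/m^{(g)}$, so monotonicity is not evident. Here I would use the Riccati relation (\ref{eq:riccati}) with $n=1$, which gives $1/m^{(g)}(\zeta)=q_1-\zeta-b_1^2 m_1^{(g)}(\zeta)$ and hence $\phi(\zeta)=q_1-b_1^2 m_1^{(g)}(\zeta)$. Since $m_1^{(g)}$ is the Weyl $m$-function of the self-adjoint operator $J_1^{(g)}$ (whose spectrum is the zero set of $m^{(g)}$ and hence discrete and interlacing with $\sigma(J^{(g)})$ by Remark~\ref{rem:m-zeros-poles}), it is Herglotz and strictly increasing from $-\infty$ to $+\infty$ on each interval between consecutive points of $\sigma(J_1^{(g)})$. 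Consequently $\phi=q_1-b_1^2 m_1^{(g)}$ is strictly decreasing from $+\infty$ to $-\infty$ on each such interval, so it meets each of the two increasing lines exactly once there. This recovers the interlacing of $\sigma(J^{(g)})$ with $\sigma(J_1^{(g)})$ and simultaneously places exactly one point of $\sigma(J^{(g)}(\theta))$ in each of these intervals.

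Once this common scaffold is in place, interlacing and the shift direction follow from an elementary comparison inside a single interval. Writing $\lambda$ for the solution of $\phi(\zeta)=\zeta$ and $\mu$ for that of $\phi(\zeta)=\zeta/\theta^2$ in a fixed interval, I would compute $(\phi(\zeta)-\zeta)-(\phi(\zeta)-\zeta/\theta^2)=\zeta(1-\theta^2)/\theta^2$ and evaluate at $\zeta=\lambda$; since $\zeta\mapsto\phi(\zeta)-\zeta/\theta^2$ is strictly decreasing with unique zero $\mu$, this gives $\sgn(\mu-\lambda)=\sgn(\lambda)\,\sgn(\theta-1)$, the integrated form of Proposition~\ref{prop:derivative}. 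In particular $\mu$ and $\lambda$ always have the same sign, so no eigenvalue crosses the origin and the comparison stays internal to $\reals_+$ and to $\reals_-$. For $\theta<1$ this yields $\mu<\lambda$ on $\reals_+$ (shift to the left) and $\mu>\lambda$ on $\reals_-$ (shift to the right), with the opposite conclusions for $\theta>1$. Because the within-interval order is the same for all intervals of a given half-line and successive intervals are separated by a point of $\sigma(J_1^{(g)})$, the two spectra alternate, i.e. $\sigma(J^{(g)})$ and $\sigma(J^{(g)}(\theta))$ interlace in $\reals_+$ and in $\reals_-$.

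Finally I would dispose of the degenerate configurations. In the interval containing $0$ the value $\phi(0)$ fixes the common sign of the pair $\lambda,\mu$ (both positive, both negative, or, exactly when $0\in\sigma(J^{(g)})$, both equal to $0$), so $0$ is the only possible common point of the two spectra, matching Remark~\ref{rem:zeros-poles}. For the unbounded end intervals I would use $m_1^{(g)}(\zeta)\to0$, hence $\phi(\zeta)\to q_1$, as $\zeta\to\pm\infty$, which still forces a single crossing of $\phi$ with each line and confirms that the counting above is exhaustive.
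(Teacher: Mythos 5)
Your argument is correct, but it takes a genuinely different route from the paper's. The paper works with the quotient $\mathfrak{m}=m^{(g)}/m^{(g,\theta)}$, written via (\ref{eq:m-through-m}) as $\zeta(\theta^2-1)m^{(g)}(\zeta)+\theta^2$, and proves interlacing by counting sign changes of $\mathfrak{m}$ (and of $1/\mathfrak{m}$) between contiguous eigenvalues: the limits (\ref{eq:m-asympt-in-eigenvalue}) force an odd number of crossings of the $0$-axis, three or more crossings or a tangential touch are excluded using the simplicity of the poles of $m^{(g,\theta)}$ and the interlacing of $\sigma(J^{(g)})$ with $\sigma(J_1^{(g)})$ from Remark~\ref{rem:m-zeros-poles} (the configurations of Fig.~\ref{fig:3}), and the direction of the shift is then imported from Proposition~\ref{prop:derivative}. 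You instead use the Riccati equation (\ref{eq:riccati}) at $n=1$ to write $\phi(\zeta)=\zeta+1/m^{(g)}(\zeta)=q_1-b_1^2m_1^{(g)}(\zeta)$, a function strictly decreasing from $+\infty$ to $-\infty$ on each gap of $\sigma(J_1^{(g)})$, and read off both spectra as the intersections of its graph with the two lines $y=\zeta$ and $y=\zeta/\theta^2$. This buys three things at once: exactly one $\lambda$ and one $\mu$ per gap (hence interlacing, once you note that the within-gap order is the same on all gaps of a given half-line), the shift direction from the one-line sign evaluation $\phi(\lambda)-\lambda/\theta^2=\lambda(\theta^2-1)/\theta^2$ without invoking Proposition~\ref{prop:derivative}, and automatic exclusion of tangencies and multiple crossings by strict monotonicity, so no case analysis of impossible crossings is needed. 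The underlying ingredients are the same --- the shared zero set $\sigma(J_1^{(g)})$ and Herglotz monotonicity --- but your packaging is more elementary and self-contained; the only points that genuinely require care are the unbounded end gaps and the gap containing $0$, and you handle both correctly.
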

\begin{proof}
  In view of Remark~\ref{rem:zeros-poles}, one only needs to verify
  that between two positive and contiguous eigenvalues of $J^{(g)}$
  there is only one eigenvalue of $J^{(g)}(\theta)$ and viceversa.
  Take two positive and contiguous eigenvalues of $\sigma(J^{(g)})$,
  $\lambda<\widetilde{\lambda}$.  Due to (\ref{eq:m-discrete}), one
  has
  \begin{equation}
    \label{eq:m-asympt-in-eigenvalue}
    \lim_{\substack{t\to\widetilde{\lambda}^- \\
        t\in\reals}}m^{(g)}(t)=+\infty\,,
\qquad
    \lim_{\substack{t\to\lambda^+ \\ t\in\reals}}m^{(g)}(t)=-\infty\,.
  \end{equation}
  Now, in (\ref{eq:m-through-m}) assume that $\theta>1$.  Thus,
  because of the positivity of $\lambda,\widetilde{\lambda}$,
  (\ref{eq:m-through-m}) and (\ref{eq:m-asympt-in-eigenvalue}) imply that
  \begin{equation*}
     \lim_{\substack{t\to\widetilde{\lambda}^- \\
         t\in\reals}}\mathfrak{m}(t)
    =+\infty\,,\qquad
    \lim_{\substack{t\to\lambda^+ \\ t\in\reals}}\mathfrak{m}(t)=-\infty\,.
  \end{equation*}
  Since $\mathfrak{m}$ is analytic on
  the interval $(\lambda,\widetilde{\lambda})$, it should cross the
  0-axis an odd number of times. If it crosses this axis three or more
  times as in Fig. \ref{fig:3} ($a$), then, by
  Remarks~\ref{rem:m-zeros-poles} and \ref{rem:zeros-poles}, there are
  at least two elements of $\sigma(J_1^{(g)})$ in
  $(\lambda,\widetilde{\lambda})$. But, because of
  Remark~\ref{rem:m-zeros-poles}, this would contradict the fact that
  $\lambda,\widetilde{\lambda}$ are contiguous.
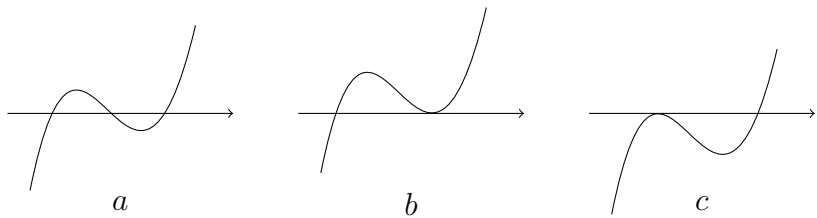
\begin{figure}[h]
\begin{center}
  \begin{tikzpicture}
    \draw[yshift=-15] (-.7,-.5) .. controls (0,3) and (.65,-1.9)
    .. (1.5,1.7);
    \draw[->] (-1,0) -- (2,0);
    \path (0.5,-1.2) node {$a$};
    \draw[xshift=110,yshift=-8.3] (-.7,-.5) .. controls (0,3) and (.65,-1.9)
    .. (1.5,1.7);
    \draw[->,xshift=110] (-1,0) -- (2,0);
    \path[xshift=110] (0.5,-1.2) node {$b$};
    \draw[xshift=220,yshift=-24] (-.7,-.5) .. controls (0,3) and (.65,-1.9)
    .. (1.5,1.7);
    \draw[->,xshift=220] (-1,0) -- (2,0);
    \path[xshift=220] (0.5,-1.2) node {$c$};
  \end{tikzpicture}
\end{center}
\caption{Impossible crossings of the 0-axis by $\mathfrak{m}$}\label{fig:3}
\end{figure}
Observe that one should discard the possibility of one crossing of the
0-axis and a tangential touch of it as in Fig. \ref{fig:3} ($b$) and
($c$). But again the impossibility of this follows from the fact that
the poles of $m^{(g,\theta)}$ are simple (see
Remark~\ref{rem:m-zeros-poles}). Analogously, between two contiguous
eigenvalues of $J^{(g)}(\theta)$, the function
$\frac{1}{\mathfrak{m}}\eval{\reals}$ crosses the $0$-axis exactly
once.  Thus, the interlacing in $\reals_+$ has been established. By the
same token, the spectra interlace in $\reals_-$. The case $\theta <1$
is treated in a similar way. The second assertion follows directly
from Proposition~\ref{prop:derivative}.
\end{proof}
\begin{remark}
  We note that $\sigma(J^{(g)})\cap\reals_+$,
  $\sigma(J^{(g)})\cap\reals_-$, may be finite or empty.
\end{remark}
\section{Inverse spectral analysis for $J^{(g)}$ and $J^{(g)}(\theta)$}
\label{sec:reconstruction}
In this section we find some necessary conditions for the spectra of
$J^{(g)}(\theta)$ ($\theta>0$). Also we provide a reconstruction algorithm
of the Jacobi matrix and establish uniqueness of the
reconstruction. Some of the formulae obtained in this section (see for
instance Corollary~\ref{cor:theta}) have an
analogous one in the finite case \cite{delrio-kudryavtsev}, \cite{Ram}.

A central part of our approach is the Weyl
$m$-function and its properties. We begin our discussion by setting
out a convention for enumerating the elements of the spectra.
\begin{convention}
\label{con:enumeration}
For a given countable set of real numbers $S$ without finite points of
accumulation, let $M$ be an infinite subset of consecutive integers
such that there is a one-to-one increasing function $h:M\to S$ with
the property that, $h^{-1}(0)=\{0\}$ when $0$ is in $S$.
Thus, $M$
is semi-bounded from above (below) if and only if
the same holds for $S$. We write $S=\{\lambda_k\}_{k\in M}$, where
$\lambda_k=h(k)$. Note that in the sequence $\{\lambda_k\}_{k\in M}$
only  $\lambda_0$ is allowed to be zero. Thus, if $-1,1\in M$,
then
\begin{equation*}
  \lambda_{-1} < 0 < \lambda_1\,.
\end{equation*}
In the sequel, the spectra of all operators will be enumerated
according to this convention.

When $\{\lambda_k\}_{k\in M}$ is considered together with a sequence
interlacing with it, we use the same set $M$ for enumerating both
sequences. For instance, if $\{\lambda_k\}_{k\in M}$ and
$\{\mu_k\}_{k\in M}$ are interlacing and not semi-bounded, then one can
assume that
\begin{equation*}
  \lambda_k<\mu_k<\lambda_{k+1}\,,\quad\forall k\in M.
\end{equation*}
\end{convention}
The following auxiliary result can be found in
\cite[Sec.\,4]{weder-silva}. We sketch the proof here for the reader's
convenience.
\begin{lemma}
  Let $J^{(g)}$ have discrete spectrum and assume that
  $\sigma(J^{(g)})=\{\lambda_k\}_{k\in M}$, and
  $\sigma(J_1^{(g)})=\{\eta_k\}_{k\in M}$. Then, the following formula
  holds for the Weyl $m$-function of $J^{(g)}$
  \begin{equation}
 \label{eq:levin-herglotz-gen}
    m^{(g)}(\zeta)=C \frac{\zeta-\eta_0}{\zeta-\lambda_0}
  \prod_{\substack{k\in M\\k\ne 0}} \left(1-\frac{\zeta}{\eta_k}\right)
  \left(1-\frac{\zeta}{\lambda_k}\right)^{-1}\,,
  \end{equation}
Moreover, $C<0$ and
\begin{equation}
  \label{eq:enum-zeros-poles-alt}
  \eta_k<\lambda_k<\eta_{k+1}\quad\forall k\in M\,,
\end{equation}
if $\sigma(J^{(g)})$ is semi-bounded from above,
while, $C>0$ and
\begin{equation}
  \label{eq:enum-zeros-poles}
  \lambda_k<\eta_k<\lambda_{k+1}\quad\forall k\in M
\end{equation}
otherwise.
\end{lemma}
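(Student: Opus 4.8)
The plan is to read off the analytic structure of $m^{(g)}$ from the discreteness of the spectrum and then fit it into a genus-zero product. First I would record the pointwise data: by Remark~\ref{rem:m-zeros-poles} the function $m^{(g)}$ is meromorphic with simple poles exactly at $\sigma(J^{(g)})=\{\lambda_k\}_{k\in M}$ and simple zeros exactly at $\sigma(J_1^{(g)})=\{\eta_k\}_{k\in M}$, and these two sets interlace. Two further facts will carry the whole argument: differentiating (\ref{eq:m-discrete}) term by term gives $\frac{d}{dt}m^{(g)}(t)=\sum_k\alpha_k^{-1}(\lambda_k-t)^{-2}>0$ on every gap of $\reals\setminus\sigma(J^{(g)})$, so $m^{(g)}$ increases strictly from $-\infty$ to $+\infty$ across each such gap; and the asymptotics (\ref{eq:m-asympt}) give $m^{(g)}(t)\to 0^{-}$ as $t\to+\infty$ and $m^{(g)}(t)\to 0^{+}$ as $t\to-\infty$ along $\reals$.

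Next I would produce the product itself. Let $R(\zeta)$ denote the right-hand side of (\ref{eq:levin-herglotz-gen}). Using that the residues $-\alpha_k^{-1}$ of $m^{(g)}$ at the poles $\lambda_k$ are summable against $|\lambda_k|^{-1}$ (this is exactly the convergence of the series (\ref{eq:m-discrete}) evaluated at $\zeta=0$), the interlacing pairing of each $\eta_k$ with $\lambda_k$ shows that the infinite product in $R$ converges locally uniformly on $\complex\setminus\sigma(J^{(g)})$ and defines a genus-zero meromorphic function with the same zeros and poles as $m^{(g)}$. Consequently $m^{(g)}/R$ is entire and zero-free, hence of the form $e^{h}$ with $h$ entire; since both $m^{(g)}$ and $R$ are Herglotz functions of bounded characteristic with the minimal $O(|\zeta|^{-1})$ decay dictated by (\ref{eq:m-asympt}), the quotient has no exponential growth, so $h$ is constant. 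This yields (\ref{eq:levin-herglotz-gen}) with a real constant $C$.

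It remains to fix the interlacing direction and the sign of $C$, and here the two cases split. Suppose first that $\sigma(J^{(g)})$ is semi-bounded from above, with largest pole $\lambda_{\max}$. For $t>\lambda_{\max}$ the monotonicity together with $m^{(g)}(t)\to 0^{-}$ forces $m^{(g)}(t)<0$ and leaves no zero above $\lambda_{\max}$; propagating this downward through the gaps by interlacing then places exactly one zero just below each pole, which is precisely (\ref{eq:enum-zeros-poles-alt}). Moreover, for such $t$ every factor of $R$ is manifestly positive — the factor indexed by $0$ absorbs the only possible sign change across the origin, which is why it is displayed separately — so $\sgn R(t)=\sgn C$, and comparison with $m^{(g)}(t)<0$ gives $C<0$. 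If instead $\sigma(J^{(g)})$ is not semi-bounded from above, then it is unbounded above, and the symmetric analysis on a ray below the spectrum (when it is bounded below), or the known negativity of the residues in (\ref{eq:m-discrete}) in general, produces the opposite alignment (\ref{eq:enum-zeros-poles}) and $C>0$.

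The hard part is the analytic core of the second paragraph: verifying genus zero and excluding an exponential factor, which is what actually requires the Herglotz growth bound rather than mere formal matching of zeros and poles. Once that bound is in hand, the interlacing inequalities and the sign of $C$ reduce to routine sign bookkeeping resting on the monotonicity of $m^{(g)}$ on $\reals$ and the end behaviour recorded from (\ref{eq:m-asympt}).
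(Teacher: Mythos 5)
Your overall architecture is sound and genuinely different from the paper's: the paper does not build the product by hand at all, but simply observes (via the variational principle) which of the two sequences contains the extremal point, arranges the labels accordingly, and then quotes Levin's representation theorem \cite[Chap.\,7,\,Thm.\,1]{MR589888} --- applied to $m^{(g)}$ itself when $\sigma(J^{(g)})$ is not semi-bounded from above, and to the Herglotz function $-1/m^{(g)}$ (swapping the roles of zeros and poles) when it is, which is exactly how the two sign/arrangement cases arise. Your replacement of the min-max step by the monotonicity of $m^{(g)}$ on each gap together with the sign of $m^{(g)}$ on the rays outside the spectrum is correct and, to my mind, a nice self-contained alternative; the sign bookkeeping for $C$ in the third paragraph is also fine.

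The genuine gap is in the second paragraph, and you have in effect flagged it yourself without closing it. Two assertions there are doing all the work and neither is proved: (i) that $R$ is a Herglotz function --- this is the converse half of Levin's theorem (interlacing real zeros and poles of a convergent product of this genus-zero form imply $\im R>0$ on $\complex_+$) and cannot simply be asserted, since it is logically on the same footing as the statement you are trying to prove; and (ii) that an entire $h$ with $m^{(g)}/R=e^h$ must be constant. The phrase ``bounded characteristic with the minimal $O(|\zeta|^{-1})$ decay, so the quotient has no exponential growth'' is not an argument: a priori $e^h$ could be, say, $e^{i c\zeta}$ with $c$ real, which is bounded on $\complex_+$ and does not disturb the zero/pole data, so one needs a genuine two-sided growth estimate (upper and lower bounds of the form $\im\zeta/(C(1+|\zeta|^2))\le|m^{(g)}(\zeta)|\le C(1+|\zeta|^2)/\im\zeta$ valid for Herglotz functions) on \emph{both} $m^{(g)}$ and $R$, followed by a Phragm\'en--Lindel\"of argument controlling $h$ in the closed half-planes. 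That is precisely the content of Levin's theorem, and the clean way to repair your write-up is to do what the paper does: either cite \cite[Chap.\,7,\,Thm.\,1]{MR589888} outright, or carry out that canonical-factorization estimate explicitly. A smaller quibble: the convergence of the product in (\ref{eq:levin-herglotz-gen}) follows from the interlacing alone (the differences $1/\eta_k-1/\lambda_k$ form a telescoping-dominated absolutely convergent series), not from summability of the residues of (\ref{eq:m-discrete}) against $|\lambda_k|^{-1}$; the latter is true but is not the relevant mechanism.
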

\begin{proof}
  Assume first that $\sigma(J^{(g)})$ is semi-bounded from
  below. Since the greatest lower bound of $J$ does not exceed the
  greatest lower bound of $J_1^{(g)}$, the smallest element of
  $\{\lambda_k\}_{k\in M}$ is less than the smallest of
  $\{\eta_k\}_{k\in M}$ (see
  \cite[Chap.\,6,\,Sec.\,1.3]{MR1192782}). Thus one can enumerate the
  sequences $\{\lambda_k\}_{k\in M}$ and $\{\eta_k\}_{k\in M}$ so that
  they obey our convention and (\ref{eq:enum-zeros-poles}). According
  to \cite[Chap.\,7,\,Thm.\,1]{MR589888},
  (\ref{eq:levin-herglotz-gen}) holds with $C>0$.

Clearly, when $\sigma(J^{(g)})$ is not semi-bounded, the sequences can be
arranged to obey (\ref{eq:enum-zeros-poles}), and then
(\ref{eq:levin-herglotz-gen}) holds with $C>0$.

Now suppose that $\sigma(J^{(g)})$ is semi-bounded from above. Then
$\sigma(-J^{(g)})$ is semi-bounded from below and, consequently, the
greatest of $\{\eta_k\}_{k\in M}$ is less than the greatest of
$\{\lambda_k\}_{k\in M}$. Thus $\{\lambda_k\}_{k\in M}$, and
$\{\eta_k\}_{k\in M}$ cannot be arranged according to
(\ref{eq:enum-zeros-poles}). However, we are still able to use
(\ref{eq:enum-zeros-poles}) for arranging the zeros and poles
of the meromorphic Herglotz function $-\frac{1}{m^{(g)}}$, that is, we use
(\ref{eq:enum-zeros-poles-alt}).
Therefore \cite[Chap.\,7,\,Thm.\,1]{MR589888} gives
\begin{equation*}
  -\frac{1}{m^{(g)}(\zeta)}= \widetilde{C}
\frac{\zeta-\lambda_0}{\zeta-\eta_0}
  \prod_{\substack{k\in M\\k\ne 0}} \left(1-\frac{\zeta}{\lambda_k}\right)
  \left(1-\frac{\zeta}{\eta_k}\right)^{-1}\,,\qquad \widetilde{C}>0\,,
\end{equation*}
For completing the proof it only remains to note that the last
equation can be rewritten as asserted in the lemma. The infinite
product in (\ref{eq:levin-herglotz-gen}) is convergent because of
(\ref{eq:enum-zeros-poles-alt}) (see the proof of
\cite[Chap.\,7,\,Thm.\,1]{MR589888}).
\end{proof}
Another auxiliary simple result to be use later is the following lemma.
\begin{lemma}
  \label{lem:uniform-convergence}
  Let $J^{(g)}$ have discrete spectrum and $\{\lambda_k(\theta)\}_k$
  be the set of eigenvalues of $J^{(g)}(\theta)$. Then,
  the series
  \begin{equation}
    \label{eq:s-1-moment}
    \sum_{k\in M}\frac{\lambda_k(\theta)}{\alpha_k(\theta)}
  \end{equation}
  converges uniformly in $[\theta_1,\theta_2]\subset\reals_+$ to
  $s_1(\theta)$ (see (\ref{eq:moments})).
\end{lemma}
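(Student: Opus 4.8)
The plan is to first identify the pointwise limit (which is essentially a matter of definitions) and then reduce uniformity on $[\theta_1,\theta_2]$ to a tail estimate controlled by the second moment and by the interlacing of Proposition~\ref{prop:interlacing}.

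First I would identify the limit. Since $J^{(g)}(\theta)$ is itself a self-adjoint Jacobi operator with discrete and simple spectrum (Remark~\ref{rem:discrete-simple}), its spectral function generated by $\delta_1$ has the discrete form (\ref{eq:rho-discrete}) with jumps $\alpha_k(\theta)^{-1}$ at the points $\lambda_k(\theta)$, the normalizing constants $\alpha_k(\theta)$ being given by (\ref{eq:def-normalizing}) for the perturbed matrix. Hence, applying (\ref{eq:moments}) to $J^{(g)}(\theta)$, the first moment $s_1(\theta)$ is finite and equal to $\sum_{k\in M}\lambda_k(\theta)/\alpha_k(\theta)$; in particular the series converges absolutely for each fixed $\theta$ with sum $s_1(\theta)$. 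Thus only the uniformity of this convergence on $[\theta_1,\theta_2]$ remains.

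For the tail estimate, for indices with $\lambda_k(\theta)\ne 0$ I would write
\[
  \frac{\abs{\lambda_k(\theta)}}{\alpha_k(\theta)}
  =\frac{1}{\abs{\lambda_k(\theta)}}\cdot\frac{\lambda_k(\theta)^2}{\alpha_k(\theta)}\,,
\]
and control the numerator globally by the second moment. Applying (\ref{eq:moments}) to the perturbed operator together with $J^{(g)}(\theta)\delta_1=\theta^2q_1\delta_1+\theta b_1\delta_2$ (read off from (\ref{eq:jm-theta})), one gets $\sum_{k\in M}\lambda_k(\theta)^2/\alpha_k(\theta)=s_2(\theta)=\norm{J^{(g)}(\theta)\delta_1}^2=\theta^4q_1^2+\theta^2b_1^2$, which is bounded on the compact interval $[\theta_1,\theta_2]$, say by a constant $C_0$. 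Consequently, for any $N$ the tail over $\{k\in M:\abs{k}>N\}$ of (\ref{eq:s-1-moment}) is bounded by $C_0\big/\inf_{\abs{k}>N}\abs{\lambda_k(\theta)}$, uniformly in $\theta$, provided this infimum tends to infinity with $N$.

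The remaining point, which I expect to be the main obstacle, is exactly the uniform-in-$\theta$ lower bound $\inf_{\theta\in[\theta_1,\theta_2]}\abs{\lambda_k(\theta)}\to\infty$ as $\abs{k}\to\infty$. This I would draw from Proposition~\ref{prop:interlacing}: in $\reals_+$ (and separately in $\reals_-$) the eigenvalues of $J^{(g)}(\theta)$ interlace with the fixed set $\sigma(J^{(g)})=\{\lambda_k\}_{k\in M}$, so every perturbed eigenvalue is trapped between two consecutive unperturbed ones, independently of $\theta$. Since $\sigma(J^{(g)})$ has no finite accumulation point, these trapping intervals recede to infinity, yielding $\abs{\lambda_k(\theta)}\ge c_N$ for all $\abs{k}>N$ and all $\theta\in[\theta_1,\theta_2]$, with $c_N\to\infty$. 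Care is needed only in matching the enumeration on the two half-lines and in handling the possible eigenvalue at $0$, but Convention~\ref{con:enumeration}, together with the fact (Remark~\ref{rem:zeros-poles}) that the two spectra meet only at $0$ when $\theta\ne 1$, renders this bookkeeping routine. Combining the three steps, the tail of (\ref{eq:s-1-moment}) beyond $\abs{k}=N$ is at most $C_0/c_N$ uniformly in $\theta\in[\theta_1,\theta_2]$, which establishes the claimed uniform convergence.
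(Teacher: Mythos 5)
Your proof is correct, and it shares the paper's central idea---controlling the first-moment series through the second-moment series $\sum_{k}\lambda_k^2(\theta)/\alpha_k(\theta)=s_2(\theta)$---but the mechanism by which you extract uniformity is genuinely different. The paper notes that the second-moment series has nonnegative terms and converges pointwise to the continuous function $s_2(\theta)$ on the compact interval, so by Dini's theorem (Titchmarsh, Sec.~1.31) that convergence is uniform, and the termwise bound $\abs{\lambda_k}<\lambda_k^2$ for $\abs{\lambda_k}>1$ then transfers uniform convergence to (\ref{eq:s-1-moment}). You avoid Dini's theorem altogether: you bound $s_2(\theta)=\norm{J^{(g)}(\theta)\delta_1}^2=\theta^4q_1^2+\theta^2b_1^2$ by an explicit constant $C_0$ on $[\theta_1,\theta_2]$, and you get a uniform lower bound $\inf_\theta\abs{\lambda_k(\theta)}\ge c_N\to\infty$ for $\abs{k}>N$ from the interlacing of Proposition~\ref{prop:interlacing}, each branch $\lambda_k(\theta)$ being confined to a gap of the fixed spectrum $\sigma(J^{(g)})$ because, by Remark~\ref{rem:zeros-poles}, a continuous branch cannot cross a nonzero unperturbed eigenvalue. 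This yields the quantitative tail estimate $C_0/c_N$. What your route buys is an explicit rate and independence from Dini's theorem (and, incidentally, it supplies the computation of $s_2(\theta)$ that the paper only implicitly invokes when asserting continuity); what it costs is the additional trapping argument, which is correct but not entirely free bookkeeping, whereas the paper's comparison-test route dispenses with any localization of the eigenvalues. Both arguments are sound.
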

\begin{proof}
  From (\ref{eq:moments}) and (\ref{eq:rho-discrete}), it follows
  that the series converges pointwise to $s_1(\theta)$. The series
  \begin{equation}
    \label{eq:s-2-moment}
    \sum_{k\in M}\frac{\lambda_k^2(\theta)}{\alpha_k(\theta)}
  \end{equation}
  converges also pointwise to the function $s_2(\theta)$. Since this
  function is continuous in $[\theta_1,\theta_2]$, then
  (\ref{eq:s-2-moment}) is uniformly convergent in that interval (see
  \cite[Sec.\,1.31]{titchmarsh_functions}). Now, for any
  $\theta\in[\theta_1,\theta_2]$ and $\abs{\lambda_k}>1$, one has
  \begin{equation*}
    \abs{\lambda_k}<\lambda_k^2\,,
  \end{equation*}
  so (\ref{eq:s-1-moment}) is uniformly convergent in
  $[\theta_1,\theta_2]$.
\end{proof}
\begin{remark}
  \label{rem:arrange_sequences}
  Proposition~\ref{prop:interlacing} tells that the interlacing of the
  sequences $\sigma(J^{(g)})=\{\lambda_k\}_k$ and
  $\sigma(J(\theta))=\{\mu_k\}_k$ is different in $\reals_+$ and
  $\reals_-$. So let us agree to enumerate the sequences according to
  our convention (the subscripts of the sequences run over $M$ and
  only the eigenvalues with subscript equal zero are allowed to be
  zero) and obeying
  \begin{equation*}
    \lambda_k<\mu_k<\lambda_{k+1} \quad\text{in}\ \reals_+\,,
    \qquad
    \mu_k<\lambda_k<\mu_{k+1} \quad\text{in}\ \reals_-\,,
  \end{equation*}
when $\theta>1$, and
\begin{equation*}
  \mu_k<\lambda_k<\mu_{k+1} \quad\text{in}\ \reals_+\,,
  \qquad
  \lambda_k<\mu_k<\lambda_{k+1} \quad\text{in}\ \reals_-\,,
\end{equation*}
if $\theta<1$.
\end{remark}
\begin{proposition}
  \label{prop:convergence-eigenvalues}
  Fix $g\in\reals\cup\{\infty\}$ and $0<\theta_1<\theta_2$. Let
  $J^{(g)}$ have discrete spectrum and assume that
  $\sigma(J^{(g)}(\theta_1))=\{\lambda_k\}_{k\in M}$ and
  $\sigma(J^{(g)}(\theta_2))=\{\mu_k\}_{k\in M}$, where the sequences
  have been arranged according to
  Remark~\ref{rem:arrange_sequences}. Then,
  \begin{equation*}
    \sum_{k\in M}(\mu_k-\lambda_k)=q_1(\theta_2^2-\theta_1^2)\,.
  \end{equation*}
\end{proposition}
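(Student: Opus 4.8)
The plan is to realize the difference $\mu_k-\lambda_k$ as the integral over $[\theta_1,\theta_2]$ of the eigenvalue derivative supplied by Proposition~\ref{prop:derivative}, to sum over $k$, and then to exchange summation and integration by invoking the uniform convergence established in Lemma~\ref{lem:uniform-convergence}.

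First I would settle the labelling. By Proposition~\ref{prop:derivative} each eigenvalue satisfies $\frac{d}{d\theta}\lambda_k(\theta)=\frac{2\lambda_k(\theta)}{\theta\alpha_k(\theta)}$, so the derivative has the same sign as $\lambda_k(\theta)$ (recall $\theta>0$ and $\alpha_k(\theta)>0$). Hence positive eigenvalues increase, negative ones decrease, and an eigenvalue that is ever $0$ stays $0$; in particular no analytic branch changes sign, and since $J^{(g)}(\theta)$ has simple spectrum (Remark~\ref{rem:discrete-simple}) distinct branches cannot collide. Combined with the analyticity in $\theta$ from Remark~\ref{rem:continuous}, this shows that the order-based enumeration of Remark~\ref{rem:arrange_sequences} is compatible with analytic continuation, so that the $k$-th eigenvalue of $J^{(g)}(\theta_1)$ deforms precisely into the $k$-th eigenvalue of $J^{(g)}(\theta_2)$. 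Writing $\lambda_k=\lambda_k(\theta_1)$ and $\mu_k=\lambda_k(\theta_2)$, the fundamental theorem of calculus then gives
\begin{equation*}
  \mu_k-\lambda_k=\int_{\theta_1}^{\theta_2}\frac{d}{d\theta}\lambda_k(\theta)\,d\theta
  =\int_{\theta_1}^{\theta_2}\frac{2\lambda_k(\theta)}{\theta\alpha_k(\theta)}\,d\theta\,.
\end{equation*}

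Next I would sum over $k\in M$ and interchange sum and integral. By Lemma~\ref{lem:uniform-convergence} the series $\sum_{k\in M}\frac{\lambda_k(\theta)}{\alpha_k(\theta)}$ converges uniformly on $[\theta_1,\theta_2]$, and since $2/\theta$ is continuous (hence bounded) on that interval, $\sum_{k\in M}\frac{2\lambda_k(\theta)}{\theta\alpha_k(\theta)}$ converges uniformly as well. Uniform convergence licenses term-by-term integration, yielding simultaneously the convergence of the left-hand series and the identity
\begin{equation*}
  \sum_{k\in M}(\mu_k-\lambda_k)
  =\int_{\theta_1}^{\theta_2}\frac{2}{\theta}
  \sum_{k\in M}\frac{\lambda_k(\theta)}{\alpha_k(\theta)}\,d\theta
  =\int_{\theta_1}^{\theta_2}\frac{2}{\theta}\,s_1(\theta)\,d\theta\,,
\end{equation*}
where $s_1(\theta)$ is the first moment of $\rho^{(g,\theta)}$ appearing in Lemma~\ref{lem:uniform-convergence}.

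Finally I would identify $s_1(\theta)$. Since $s_1(\theta)=\int_\reals t\,d\rho^{(g,\theta)}(t)=\inner{\delta_1}{J^{(g)}(\theta)\delta_1}$ equals the $(1,1)$-entry of the matrix (\ref{eq:jm-theta}), namely $\theta^2q_1$ (equivalently, this is read off from the analogue of (\ref{eq:m-asympt}) for $m^{(g,\theta)}$), the computation collapses to the elementary integral $\int_{\theta_1}^{\theta_2}2\theta q_1\,d\theta=q_1(\theta_2^2-\theta_1^2)$, which is the assertion. The only genuinely delicate point is the interchange of summation and integration; this is exactly what Lemma~\ref{lem:uniform-convergence} is designed to license, so the principal obstacle is already removed by the preparatory lemma. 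The remaining subtlety is the verification that the order-based enumeration of Remark~\ref{rem:arrange_sequences} matches the analytic branches produced by Proposition~\ref{prop:derivative} and Remark~\ref{rem:continuous}, which I dispatch using the sign-preservation and non-collision of the eigenvalue flow noted above.
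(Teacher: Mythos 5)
Your proof is correct and follows essentially the same route as the paper's: integrate the derivative formula of Proposition~\ref{prop:derivative} over $[\theta_1,\theta_2]$, interchange sum and integral via Lemma~\ref{lem:uniform-convergence}, and identify $s_1(\theta)=\inner{\delta_1}{J^{(g)}(\theta)\delta_1}=q_1\theta^2$. The only difference is that you explicitly justify that the order-based enumeration of Remark~\ref{rem:arrange_sequences} agrees with the analytic eigenvalue branches (via sign preservation and simplicity), a point the paper's proof passes over in silence; this is a welcome addition rather than a deviation.
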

\begin{proof}
  Observe that from Proposition~\ref{prop:derivative} it follows that
  \begin{equation*}
    \mu_k-\lambda_k=2\int_{\theta_1}^{\theta_2}
    \frac{\lambda_k(\theta)d\theta}{\theta\alpha_k(\theta)}\,.
  \end{equation*}
Consider a sequence $\{M_n\}_{n=1}^\infty$ of subsets of $M$, such
  that $M_n\subset M_{n+1}$ and $\cup_nM_n=M$.
Thus
\begin{equation*}
  \sum_{k\in M}(\mu_k-\lambda_k)=2
\lim_{n\to\infty}\int_{\theta_1}^{\theta_2}\left(
\sum_{k\in M_n}\frac{\lambda_k(\theta)}{\alpha_k(\theta)}
\right)\frac{d\theta}{\theta}
\end{equation*}
By Lemma \ref{lem:uniform-convergence} and the fact that
\begin{equation*}
  s_1(\theta)=\inner{\delta_1}{J^{(g)}(\theta)\delta_1}=q_1\theta^2\,,
\end{equation*}
one obtains
\begin{equation*}
  \sum_{k\in M}(\mu_k-\lambda_k)=
2q_1\int_{\theta_1}^{\theta_2}\theta d\theta
  =q_1(\theta_2^2-\theta_1^2)
\end{equation*}
\end{proof}
\begin{proposition}
  \label{prop:m-goth-actual-form}
  Fix $g\in\reals\cup\{\infty\}$ and $0<\theta\ne 1$. Let $J^{(g)}$ have
  discrete spectrum and assume that
  $\sigma(J^{(g)})=\{\lambda_k\}_{k\in M}$ and
  $\sigma(J^{(g)}(\theta))=\{\mu_k\}_{k\in M}$, where the sequences
  have been arranged according to
  Remark~\ref{rem:arrange_sequences}. Then,
  \begin{equation*}
    \mathfrak{m}(\zeta)=
      \prod\limits_{k\in M}\frac{\zeta-\mu_k}{\zeta-\lambda_k}\,.
  \end{equation*}
\end{proposition}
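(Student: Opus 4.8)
The plan is to obtain the product by dividing the Levin-type representations (\ref{eq:levin-herglotz-gen}) of $m^{(g)}$ and $m^{(g,\theta)}$, exploiting that the two Weyl functions share the same zeros. Since $J_1^{(g)}(\theta)=J_1^{(g)}$, Remark~\ref{rem:m-zeros-poles} gives that the zero sets of $m^{(g)}$ and $m^{(g,\theta)}$ coincide, both equal to $\sigma(J_1^{(g)})=:\{\eta_k\}_{k\in M}$, whereas the pole sets are $\{\lambda_k\}$ and $\{\mu_k\}$ respectively (both functions have discrete spectrum by Remark~\ref{rem:discrete-simple}). As $J^{(g)}(\theta)$ is a finite-rank, hence bounded, perturbation of $J^{(g)}$, semi-boundedness is preserved, so the lemma yielding (\ref{eq:levin-herglotz-gen}) applies to both Weyl functions in the same form.

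The first substantive step is to pin down the enumeration of the common zeros. The delicate point is that (\ref{eq:levin-herglotz-gen}) enumerates $\{\eta_k\}$ by interlacing with $\{\lambda_k\}$ for $m^{(g)}$, but by interlacing with $\{\mu_k\}$ for $m^{(g,\theta)}$, and a priori these two labellings of the same set could differ by a shift. To exclude this, I would evaluate (\ref{eq:m-through-m}) at a pole $\mu_k$ of $m^{(g,\theta)}$, where $\mathfrak{m}$ vanishes, obtaining $m^{(g)}(\mu_k)=-\theta^2/(\mu_k(\theta^2-1))$. The sign of the right-hand side, together with the fact that on each gap $(\lambda_k,\lambda_{k+1})$ the function $m^{(g)}$ increases from $-\infty$ to $+\infty$ through its unique zero $\eta_k$ (Remark~\ref{rem:m-zeros-poles} and (\ref{eq:m-asympt-in-eigenvalue})), locates $\mu_k$ inside the subinterval on which $m^{(g)}$ has the matching sign; one then reads off that the zero $\eta_k$ also lies in $(\mu_k,\mu_{k+1})$. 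Carrying the identical check through in the remaining regimes ($\theta\gtrless 1$, in $\reals_+$ and in $\reals_-$) shows the two labellings agree.

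With the enumeration settled I would divide the two representations. The factors $(1-\zeta/\eta_k)$ cancel identically, and after rewriting $\tfrac{1-\zeta/\mu_k}{1-\zeta/\lambda_k}=\tfrac{\lambda_k}{\mu_k}\cdot\tfrac{\zeta-\mu_k}{\zeta-\lambda_k}$ one is left with $\mathfrak{m}(\zeta)=\kappa\prod_{k\in M}\tfrac{\zeta-\mu_k}{\zeta-\lambda_k}$, where $\kappa=(C/C_\theta)\prod_{k\ne 0}(\lambda_k/\mu_k)$ and the $k=0$ factor $\tfrac{\zeta-\mu_0}{\zeta-\lambda_0}$ reduces to $1$ when $0\in\sigma(J^{(g)})$ (since then $\lambda_0=\mu_0=0$). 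The convergence of this rearranged product and of $\kappa$ both rest on $\sum_{k}\abs{\mu_k-\lambda_k}<\infty$, which I would extract from Proposition~\ref{prop:convergence-eigenvalues} together with the summability argument behind Lemma~\ref{lem:uniform-convergence}, using $\abs{\lambda_k}\le(1+\lambda_k^2)/2$ and the finiteness of the moments $s_0,s_2$.

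Finally, to identify $\kappa$, I would let $\zeta\to\infty$ with $\im\zeta\ge\epsilon$. On one hand, (\ref{eq:m-through-m}) and the asymptotics (\ref{eq:m-asympt}) give $\mathfrak{m}(\zeta)=1-q_1(\theta^2-1)/\zeta+O(\zeta^{-2})\to 1$; on the other, $\sum_k\abs{\mu_k-\lambda_k}<\infty$ forces $\prod_k\tfrac{\zeta-\mu_k}{\zeta-\lambda_k}\to 1$ away from the real axis. Hence $\kappa=1$ and the stated identity follows (one may verify consistency by matching the $1/\zeta$ terms, which reproduces $\sum_k(\mu_k-\lambda_k)=q_1(\theta^2-1)$ of Proposition~\ref{prop:convergence-eigenvalues}). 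I expect the enumeration-matching step to be the main obstacle: making the cancellation of the common zeros exact requires the sign analysis to be carried uniformly through all four regimes and the behaviour near $0$ to be checked separately.
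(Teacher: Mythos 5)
Your argument is essentially the paper's own proof: divide the two Levin-type representations (\ref{eq:levin-herglotz-gen}) so the common zeros $\sigma(J_1^{(g)})$ cancel, use the summability $\sum_k(\mu_k-\lambda_k)<\infty$ from Proposition~\ref{prop:convergence-eigenvalues} to rearrange the resulting product as $\kappa\prod_k\frac{\zeta-\mu_k}{\zeta-\lambda_k}$, and identify $\kappa=1$ by letting $\zeta\to\infty$ off the real axis, where (\ref{eq:m-asympt}) and (\ref{eq:m-through-m}) force $\mathfrak{m}\to1$. The only difference is that you make explicit (via the sign of $m^{(g)}(\mu_k)=-\theta^2/(\mu_k(\theta^2-1))$) the check that the two interlacing-based labellings of the common zero set agree, a point the paper's proof leaves implicit; this is a correct and worthwhile refinement, not a change of method.
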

\begin{proof}
  Consider a sequence $\{M_n\}_{n=1}^\infty$ of subsets of $M$, such
  that $M_n\subset M_{n+1}$ and $\cup_nM_n=M$. From (\ref{eq:levin-herglotz-gen}) and
  (\ref{eq:m-goth-def}) it follows that
  \begin{align}
    \label{eq:m-goth-krein}
    \mathfrak{m}(\zeta)&=C\frac{\zeta-\mu_0}{\zeta-\lambda_0}
    \lim_{n\to\infty}\frac{\displaystyle
  \prod_{\substack{k\in M_n\\k\ne 0}} \left(1-\frac{\zeta}{\eta_k}\right)
  \left(1-\frac{\zeta}{\lambda_k}\right)^{-1}}{\displaystyle
  \prod_{\substack{k\in M_n\\k\ne 0}} \left(1-\frac{\zeta}{\eta_k}\right)
  \left(1-\frac{\zeta}{\mu_k}\right)^{-1}}\notag\\
  &=C\frac{\zeta-\mu_0}{\zeta-\lambda_0}
   \prod_{\substack{k\in M\\k\ne 0}}
\left(1-\frac{\zeta}{\mu_k}\right)
  \left(1-\frac{\zeta}{\lambda_k}\right)^{-1}\,.
  \end{align}
  On the other hand, by
  Proposition~\ref{prop:convergence-eigenvalues}, it holds true that
  \begin{equation}
    \label{eq:infinite-product-equalities}
    \prod_{\substack{k\in M\\k\ne 0}}
\left(1-\frac{\zeta}{\mu_k}\right)
  \left(1-\frac{\zeta}{\lambda_k}\right)^{-1}=
\prod_{\substack{k\in M\\k\ne 0}}\frac{\lambda_k}{\mu_k}
\prod_{\substack{k\in M\\k\ne 0}}\frac{\zeta-\mu_k}{\zeta-\lambda_k}
  \end{equation}
From (\ref{eq:m-asympt}) and (\ref{eq:m-through-m}) it follows that
\begin{equation}
  \label{eq:function-tends-to-1}
\lim_{\substack{\zeta\to\infty \\ \im \zeta\ge\epsilon>0}}
    \mathfrak{m}(\zeta)=1\,.
\end{equation}
Also, on the basis that the second product on the r.\,h.\,s of
(\ref{eq:infinite-product-equalities}) converges uniformly, one has
\begin{equation}
  \label{eq:product-tends-to-1}
    \lim_{\substack{\zeta\to\infty \\
        \im \zeta\ge\epsilon}}
    \prod_{k\in M}
    \frac{\zeta-\mu_k}{\zeta-\lambda_k}=
    \lim_{\substack{\zeta\to\infty \\
        \im \zeta\ge\epsilon}}
    \prod_{k\in M}
    \left(1+\frac{\mu_k-\lambda_k}{\lambda_k-\zeta}\right)=1\,.
\end{equation}
Thus, (\ref{eq:m-goth-krein}), (\ref{eq:infinite-product-equalities}),
(\ref{eq:function-tends-to-1}), and (\ref{eq:product-tends-to-1})
imply that
\begin{equation*}
  C=\prod_{\substack{k\in M\\k\ne 0}}\frac{\mu_k}{\lambda_k}
\end{equation*}
and the proposition is proven.
\end{proof}
\begin{corollary}
  \label{cor:theta}
Fix $g\in\reals\cup\{\infty\}$ and $\theta>0$. Let $J^{(g)}$ have discrete
 spectrum and assume that
  $\sigma(J^{(g)})=\{\lambda_k\}_k$ and
  $\sigma(J^{(g)}(\theta))=\{\mu_k\}_k$, where the sequences have
  been arranged according to Remark~\ref{rem:arrange_sequences}. Then,
  \begin{equation*}
    \theta^2=
      \prod\limits_{k\in M}\frac{\eta-\mu_k}{\eta-\lambda_k}\,.
  \end{equation*}
where $\eta$ is any element of $\sigma(J_1^{(g)})$. Moreover, when
$0\not\in\sigma(J^{(g)})$,
\begin{equation}
  \label{eq:parameter-recovery-zero}
  \theta^2=
      \prod\limits_{k\in M}\frac{\mu_k}{\lambda_k}\,.
\end{equation}
and, if $0\in\sigma(J^{(g)})$,
\begin{equation}
  \label{eq:theta-special-case}
  \theta^2=\frac{1}{\alpha_0-1}\left\{\alpha_0
\prod_{\substack{k\in M\\k\ne 0}}\frac{\mu_k}{\lambda_k} -1\right\}\,,
\end{equation}
where $\alpha_0$ is given in (\ref{eq:def-normalizing}).
\end{corollary}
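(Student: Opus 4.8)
The plan is to derive all three identities from the two expressions already available for the quotient $\mathfrak{m}$: the product representation of Proposition~\ref{prop:m-goth-actual-form},
\[
\mathfrak{m}(\zeta)=\prod_{k\in M}\frac{\zeta-\mu_k}{\zeta-\lambda_k}\,,
\]
and the linear relation (\ref{eq:m-through-m}), namely $\mathfrak{m}(\zeta)=\zeta(\theta^2-1)m^{(g)}(\zeta)+\theta^2$. Each formula in the corollary will then emerge by evaluating these two representations at a single well-chosen point and comparing the outcomes.

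First I would establish the general formula. Take $\eta\in\sigma(J_1^{(g)})$. By Remark~\ref{rem:m-zeros-poles} the zeros of $m^{(g)}$ are precisely the elements of $\sigma(J_1^{(g)})$, so $m^{(g)}(\eta)=0$, and substituting into (\ref{eq:m-through-m}) gives $\mathfrak{m}(\eta)=\theta^2$. Since the same remark shows that $\sigma(J_1^{(g)})$ interlaces with $\sigma(J^{(g)})$, the point $\eta$ is none of the poles $\{\lambda_k\}$ of the product, so the product representation may legitimately be evaluated there, yielding $\theta^2=\prod_{k\in M}\frac{\eta-\mu_k}{\eta-\lambda_k}$.

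Next I would dispose of the case $0\notin\sigma(J^{(g)})$. By Remark~\ref{rem:zeros-poles} this is equivalent to $0\notin\sigma(J^{(g)}(\theta))$; under the enumeration convention the index $0$ then lies outside $M$, so no $\lambda_k$ or $\mu_k$ vanishes. Because $0$ is not a pole of $m^{(g)}$, the product $\zeta m^{(g)}(\zeta)$ vanishes at $\zeta=0$, and (\ref{eq:m-through-m}) gives $\mathfrak{m}(0)=\theta^2$; setting $\zeta=0$ in the product representation then produces $\theta^2=\prod_{k\in M}\mu_k/\lambda_k$, which is (\ref{eq:parameter-recovery-zero}).

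The remaining case $0\in\sigma(J^{(g)})$ is the one that requires genuine work and is the main obstacle. Here $0\in M$ with $\lambda_0=\mu_0=0$, so the $k=0$ factor of the product is identically $1$ and evaluation at $\zeta=0$ gives $\mathfrak{m}(0)=\prod_{k\ne0}\mu_k/\lambda_k$. Now, however, $m^{(g)}$ has a simple pole at $0$, so one cannot merely substitute $\zeta=0$ into (\ref{eq:m-through-m}); the delicate step is to pass to the limit $\zeta\to0$ and track the residue. Isolating the $k=0$ term of the partial-fraction expansion (\ref{eq:m-discrete}) gives $\lim_{\zeta\to0}\zeta m^{(g)}(\zeta)=-1/\alpha_0$, with $\alpha_0$ the normalizing constant at $\lambda_0=0$. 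Feeding this into (\ref{eq:m-through-m}) yields the identity
\[
\prod_{\substack{k\in M\\k\ne0}}\frac{\mu_k}{\lambda_k}=\mathfrak{m}(0)=\theta^2-\frac{\theta^2-1}{\alpha_0}\,,
\]
and solving this linear equation for $\theta^2$ gives exactly (\ref{eq:theta-special-case}). Apart from this residue computation and the bookkeeping that guarantees $\lambda_0=\mu_0=0$, every step is a direct evaluation of the two identities for $\mathfrak{m}$.
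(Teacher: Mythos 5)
Your proposal is correct and follows essentially the same route as the paper: both derive all three identities by comparing the product representation of $\mathfrak{m}$ from Proposition~\ref{prop:m-goth-actual-form} with the relation (\ref{eq:m-through-m}), evaluating at a zero $\eta$ of $m^{(g)}$, at $\zeta=0$, and (in the case $0\in\sigma(J^{(g)})$) via the residue $\lim_{\zeta\to 0}\zeta m^{(g)}(\zeta)=-\alpha_0^{-1}$ leading to (\ref{eq:0-case-m-goth-0}). You merely spell out in more detail the evaluations that the paper dismisses as ``straightforward.''
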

\begin{proof}
  The first two identities for $\theta^2$ are a straightforward
  consequence of Proposition~\ref{prop:m-goth-actual-form} and
  (\ref{eq:m-through-m}). As regards to (\ref{eq:theta-special-case}),
  note that, from (\ref{eq:m-discrete}), one has
 \begin{equation}
   \label{eq:alpha-residue}
   \alpha_k^{-1}=-\res_{\zeta=\lambda_k}m(\zeta)\,.
 \end{equation}
Thus, according to (\ref{eq:m-through-m}),
\begin{equation}
  \label{eq:0-case-m-goth-0}
  \theta^2-\alpha_0^{-1}(\theta^2-1)=\mathfrak{m}(0)=
\prod_{\substack{k\in M\\k\ne 0}}\frac{\mu_k}{\lambda_k}\,.
\end{equation}
\end{proof}
\begin{remark}
  \label{rem:0-case-m-goth-0}
  Due to (\ref{eq:0-case-m-goth-0}) and the properties of the
  normalizing constants, when $0\in\sigma(J^{(g)})$, one of the following
  inequalities hold depending on the value of $\theta\ne 1$:
 \begin{equation*}
   \theta^2<\mathfrak{m}(0)=
\prod_{\substack{k\in M\\k\ne 0}}\frac{\mu_k}{\lambda_k}<1,\qquad
1<\mathfrak{m}(0)=
\prod_{\substack{k\in M\\k\ne 0}}
\frac{\mu_k}{\lambda_k}<\theta^2\,.
 \end{equation*}
\end{remark}
\begin{theorem}
  \label{prop:reconstruction}
  Fix $g\in\reals\cup\{\infty\}$ and $\theta>0$. Let $J^{(g)}$ have
  discrete spectrum and assume that $0\not\in\sigma(J^{(g)})$. The spectra
  $\sigma(J^{(g)})$, $\sigma(J^{(g)}(\theta))$ ($\theta\ne 1$)
  uniquely determine the Jacobi matrix (\ref{eq:jm-0}), that is the
  operator $J$, the parameter $\theta$ defining the perturbation, and
  the parameter $g$ specifying the self-adjoint extension when $J\ne
  J^*$.
 \end{theorem}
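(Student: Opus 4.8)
The plan is to reduce everything to the recovery of the Weyl $m$-function $m^{(g)}$ of $J^{(g)}$. As recalled in Section~\ref{sec:preliminaries}, $m^{(g)}$ by itself determines the matrix (\ref{eq:jm-0}) --- and, when $J\ne J^*$, the parameter $g$ --- through either the Riccati recursion (\ref{eq:riccati}) combined with the asymptotics (\ref{eq:m-asympt}), or the Gram-Schmidt orthogonalization of $\{t^k\}$ in $L_2(\reals,d\rho^{(g)})$ obtained from $m^{(g)}$ by the inverse Stieltjes transform. So it suffices to extract $m^{(g)}$ and $\theta$ from the two prescribed spectra $\sigma(J^{(g)})=\{\lambda_k\}_{k\in M}$ and $\sigma(J^{(g)}(\theta))=\{\mu_k\}_{k\in M}$.

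First I would recover $\theta$. Because $0\notin\sigma(J^{(g)})$, Corollary~\ref{cor:theta} yields $\theta^2=\prod_{k\in M}\mu_k/\lambda_k$, and since $\theta>0$ this pins down $\theta$ uniquely from the data. Next, Proposition~\ref{prop:m-goth-actual-form} gives the quotient $\mathfrak{m}(\zeta)=\prod_{k\in M}(\zeta-\mu_k)/(\zeta-\lambda_k)$ directly in terms of the two spectra. With $\theta$ and $\mathfrak{m}$ both known, I would invert the linear relation (\ref{eq:m-through-m}): since $\theta\ne 1$ gives $\theta^2-1\ne 0$, it can be solved for $m^{(g)}(\zeta)=(\mathfrak{m}(\zeta)-\theta^2)/(\zeta(\theta^2-1))$, so $m^{(g)}$ is uniquely determined by the data. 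As a consistency check one notes $\mathfrak{m}(0)=\prod_{k\in M}\mu_k/\lambda_k=\theta^2$, so the apparent pole of the right-hand side at $\zeta=0$ is removable, matching the analyticity of $m^{(g)}$ at $0$ forced by $0\notin\sigma(J^{(g)})$; this is not needed for the uniqueness claim.

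Finally, feeding the recovered $m^{(g)}$ into the standard reconstruction of Section~\ref{sec:preliminaries} produces all the entries $q_n$, $b_n$ of (\ref{eq:jm-0}), hence the operator $J$, and in the non-self-adjoint case the pole-evaluation recipe recovers $g$. Since $\theta$, then $\mathfrak{m}$, then $m^{(g)}$ are each determined without ambiguity along this chain, the reconstruction is unique. I do not expect a genuine obstacle here: the analytic heavy lifting was already done in Proposition~\ref{prop:m-goth-actual-form} and Corollary~\ref{cor:theta}, and the present statement is mainly their assembly. The one point requiring care is the invertibility of (\ref{eq:m-through-m}) for $m^{(g)}$, which fails precisely when $\theta^2=1$; this is exactly why the hypothesis $\theta\ne 1$ is imposed, and it is also the reason the two spectra genuinely carry the extra information that a single spectrum would not.
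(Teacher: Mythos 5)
Your proposal is correct and follows exactly the paper's own argument: recover $\theta$ from (\ref{eq:parameter-recovery-zero}), obtain $\mathfrak{m}$ from Proposition~\ref{prop:m-goth-actual-form}, solve (\ref{eq:m-through-m}) for $m^{(g)}$ (possible since $\theta\ne1$), and then apply the reconstruction procedures of the Preliminaries to get the matrix and, if needed, $g$. The only difference is that you spell out more of the details (the explicit inversion formula and the removability of the apparent pole at $\zeta=0$) than the paper's terse three-sentence proof.
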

\begin{proof}
  Given the sequences $\sigma(J^{(g)})$ and $\sigma(J^{(g)}(\theta))$,
  one finds the parameter $\theta$ from
  (\ref{eq:parameter-recovery-zero}).
  Proposition~\ref{prop:m-goth-actual-form}
  yields the function $\mathfrak{m}$ and equation
  (\ref{eq:m-through-m}) the Weyl function $m^{(g)}$. According to the
  Preliminaries this function allows to recover the matrix associated
  to the Jacobi operator and the parameter $g$ which determines the
  self-adjoint extension when $J\ne J^*$.
\end{proof}
\begin{theorem}
  \label{prop:reconstruction-2}
  Fix $g\in\reals\cup\{\infty\}$ and $\theta>0$. Let $J^{(g)}$ have
  discrete spectrum and assume that $0\in\sigma(J^{(g)})$. The spectra
  $\sigma(J^{(g)})$, $\sigma(J^{(g)}(\theta))$ ($\theta\ne 1$), together with
  either $q_1$ or $\alpha_0$, uniquely determine the matrix
  associated to $J$, the parameter $\theta$, and the parameter $g$
  when $J\ne
  J^*$. Alternatively, the
  spectra $\sigma(J^{(g)})$, $\sigma(J^{(g)}(\theta))$ and the parameter
  $\theta\ne 1$ uniquely determine the matrix corresponding to $J$ and
  the parameter $g$ when $J$ turns out to be nonself-adjoint.
\end{theorem}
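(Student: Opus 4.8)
The strategy is to isolate the recovery of the parameter $\theta$, after which the reconstruction is identical to that of Theorem~\ref{prop:reconstruction}. Indeed, once $\theta$ is in hand, Proposition~\ref{prop:m-goth-actual-form} produces $\mathfrak{m}$ from the two spectra, (\ref{eq:m-through-m}) gives
\begin{equation*}
  m^{(g)}(\zeta)=\frac{\mathfrak{m}(\zeta)-\theta^2}{\zeta(\theta^2-1)}\,,
\end{equation*}
and the reconstruction procedures of the Preliminaries recover from $m^{(g)}$ the matrix (\ref{eq:jm-0}) and, when $J\ne J^*$, the parameter $g$ via the pole-evaluation recipe following (\ref{eq:m-discrete}). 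What fails in the present case, and did not fail in Theorem~\ref{prop:reconstruction}, is that $0\in\sigma(J^{(g)})$ forces $\lambda_0=\mu_0=0$ (Remark~\ref{rem:zeros-poles}), so the factor $k=0$ in the product (\ref{eq:parameter-recovery-zero}) becomes the indeterminate $0/0$; the two spectra therefore no longer pin down $\theta$, and precisely one extra scalar is required.

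If $\alpha_0$ is given, I would obtain $\theta$ at once from (\ref{eq:theta-special-case}). The quantity $\prod_{k\ne 0}\mu_k/\lambda_k=\mathfrak{m}(0)$ appearing there is computed from the two spectra, every factor with $k\ne 0$ being finite and nonzero because, for $\theta\ne 1$, the spectra meet only at $0$ (Remark~\ref{rem:zeros-poles}); and $\alpha_0>1$ (the eigenvector at $0$ cannot be a multiple of $\delta_1$ since $b_1>0$) makes the right-hand side of (\ref{eq:theta-special-case}) well defined. Taking the positive square root fixes $\theta$.

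If instead $q_1$ is given, I would invoke Proposition~\ref{prop:convergence-eigenvalues} with $\theta_1=1$ and $\theta_2=\theta$; since $J^{(g)}(1)=J^{(g)}$, the two data spectra are exactly $\sigma(J^{(g)}(1))$ and $\sigma(J^{(g)}(\theta))$, and the proposition yields
\begin{equation*}
  \sum_{k\in M}(\mu_k-\lambda_k)=q_1(\theta^2-1)\,.
\end{equation*}
The left-hand side is a convergent sum determined by the spectra once the interlacing arrangement of Remark~\ref{rem:arrange_sequences} is fixed, and that arrangement is itself read off the data because Proposition~\ref{prop:interlacing} determines the sign of $\theta-1$ from the direction of the shift. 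Hence $\theta^2=1+q_1^{-1}\sum_k(\mu_k-\lambda_k)$. The one delicate point here, which I expect to be the main obstacle, is the nondegeneracy $q_1\ne 0$: when $q_1=0$ the displayed identity reduces to $0=0$ and gives no information about $\theta$, which is exactly why the statement allows $\alpha_0$ as an alternative datum.

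For the final assertion nothing further is needed: supplying $\theta\ne 1$ directly bypasses the recovery above, and the same chain---Proposition~\ref{prop:m-goth-actual-form} for $\mathfrak{m}$, then (\ref{eq:m-through-m}) for $m^{(g)}$, then the Preliminaries---reconstructs the matrix associated to $J$ and, when the reconstructed operator turns out to be non-self-adjoint, the parameter $g$. Uniqueness in every case is inherited from the one-to-one correspondence between $m^{(g)}$ and the pair consisting of the matrix (\ref{eq:jm-0}) and $g$.
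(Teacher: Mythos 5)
Your proof follows essentially the same route as the paper's: the paper likewise recovers $\theta$ from (\ref{eq:theta-special-case}) when $\alpha_0$ is given, and from Proposition~\ref{prop:convergence-eigenvalues} (equivalently, the asymptotic expansion $\mathfrak{m}(\zeta)=1+q_1(1-\theta^2)\zeta^{-1}+O(\zeta^{-2})$) when $q_1$ is given, and then falls back on the reconstruction chain of Theorem~\ref{prop:reconstruction}. Your observation that the $q_1$-route degenerates when $q_1=0$ --- the identity $\sum_{k\in M}(\mu_k-\lambda_k)=q_1(\theta^2-1)$ then reads $0=0$, and since every admissible $\theta$ in Theorem~\ref{prop:necessary-sufficient-zero} reproduces an operator with $q_1=0$, the data genuinely fail to single out $\theta$ --- is a real caveat that the paper's proof passes over in silence.
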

\begin{proof}
  This follows immediately from the proof of the previous theorem,
  taking into account (\ref{eq:theta-special-case}). Note that
  $\theta$ can be determined either by Proposition
  \ref{prop:convergence-eigenvalues} or by the asymptotic formula
  \begin{equation*}
    \mathfrak{m}(\zeta)=1+\frac{q_1(1-\theta^2)}{\zeta}
  +O(\zeta^{-2})\,,
\end{equation*}
as $\zeta\to\infty$ ($\im \zeta\ge \epsilon$, $\epsilon>0$),
obtained by combining (\ref{eq:m-asympt}) and
  (\ref{eq:m-through-m}).
\end{proof}
\begin{remark}
\label{rem:inverse-mass-spring}
Theorems \ref{prop:reconstruction} and \ref{prop:reconstruction-2}
solve the problem of reconstructing the matrix from spectral
data. However, in order to solve the inverse problem for the
mass-spring system, one should also recover the masses and spring
constants from the matrix entries. This is actually not difficult as it is
shown below (cf. \cite[Chap.\,8]{marchenko-new}).

On the basis of (\ref{eq:spring-mass}), one finds the equations
\begin{align*}
  k_{j+1}&=-(k_j+q_jm_j),\\
  m_{j+1}&=\frac{k_{j+1}^2}{m_jb_j^2},
\end{align*}
which allow to find recursively all spring constants and masses of the
system from the first spring constant and mass. Note that, when
the parameters $k_1$ and $m_1$ are given, only the quotient $\frac{k_1}{m_1}$
does not depend on the choice of mass unit. This quotient has a
concrete physical meaning: it equals the squared natural
frequency of the mass $m_1$ attached with the spring
$k_1$ to a fixed support. Thus, it is physically convenient to find a way of expressing
$k_j/m_j$ in terms of $k_1/m_1$. This is achieved by means of the
following continued fraction
\begin{equation*}
  \frac{k_{j+1}}{m_{j+1}}=
  \cfrac{-b_j^2}{q_j
          - \cfrac{b_{j-1}^2}{\cdots q_2
          - \cfrac{b_1^2}{q_1+\frac{k_1}{m_1}}}}\,,
\end{equation*}
which is constructed from $\frac{k_1}{m_1}$ upwards (cf. [17
p.~76]). We remark that, unlike the finite matrix case, here one cannot apply
without substantial changes, the method developed in
\cite[Chap.\,8]{marchenko-new} for determining the set of admissible
values for the  quotient
$\frac{k_1}{m_1}$. Admissible values of $\frac{k_1}{m_1}$ are those
for which $\frac{k_{j+1}}{m_{j+1}}$ is a positive real number for
any $j\in\nats$.
\end{remark}
\section{Necessary and sufficient conditions for the spectra\\
of $J^{(g)}$ and $J^{(g)}(\theta)$}
\label{sec:nec-suf}
The following statement gives an if-and-only-if criterion for two
sequences to be the spectra of $J^{(g)}$ and $J^{(g)}(\theta)$. In the
finite case the interlacing condition given in a) (see below) is
necessary and sufficient \cite{delrio-kudryavtsev},\cite{Ram}.

\begin{theorem}
  \label{prop:necessary-sufficient}
  Given two infinite real sequences $\{\lambda_k\}_k$ and
  $\{\mu_k\}_k$ without finite points of accumulation, such that none
  of them contains the zero, there is a unique positive $\theta$, a
  unique operator $J$, and a unique $g\in\reals\cup\{\infty\}$ if
  $J\ne J^*$, such that $\{\mu_k\}_k$ is the spectrum of
  $J^{(g)}(\theta)$ and $\{\lambda_k\}_k$ is the spectrum of $J^{(g)}$
  if and only if the following conditions are satisfied.
\begin{enumerate}[\ a)]
\item $\{\lambda_k\}_k$ and $\{\mu_k\}_k$ interlace in $\reals_+$,
  $\reals_-$ with one sequence shifted to the right (left) in
  $\reals_+$, ($\reals_-$) with respect to the other one. Thus, the
  sequences can be ordered according to
  Remark~\ref{rem:arrange_sequences}.
    \label{interlace-sufficient}
\item The following series converges
  \begin{equation*}
   \sum_{k\in M}(\mu_k-\lambda_k)
\end{equation*}\label{sum-spectr-sufficient}
By condition \ref{sum-spectr-sufficient}) the products
$\displaystyle\prod_{\substack{k\in M\\k\ne n}}
\frac{\mu_k-\lambda_n}{\lambda_k-\lambda_n}$,
$\displaystyle\prod_{k\in M}
\frac{\mu_k}{\lambda_k}$
 are convergent, so
define
  \begin{equation}
    \label{eq:tau-def-1}
    \tau_n:=
   \frac{(\mu_n-\lambda_n)
\displaystyle\prod_{\substack{k\in M\\k\ne n}}
\frac{\mu_k-\lambda_n}{\lambda_k-\lambda_n}
}{\lambda_n\left(\displaystyle\prod_{k\in M}
\frac{\mu_k}{\lambda_k}-1\right)}\,,
  \quad \forall n\in M\,.
  \end{equation}
\item The sequence $\{\tau_n\}_{n\in M}$
is such that, for $m=0,1,2,\dots$, the series
\begin{equation*}
    \sum_{k\in M}\lambda_k^{2m}\tau_k
    \quad\text{converges.}
\end{equation*}
\label{finite-moments-sufficient}
\item If a sequence of complex numbers $\{\beta_k\}_{k\in M}$
  is such that the series
  \begin{equation*}
    \sum_{k\in
      M}\abs{\beta_k}^2\tau_k
\quad\text{converges}
  \end{equation*}
and, for $m=0,1,2,\dots$,
\begin{equation*}
  \sum_{k\in
    M}\beta_k\lambda_k^m\tau_k=0\,,
\end{equation*}
then $\beta_k=0$ for all $k\in M$.
\label{density-poly-sufficient}
  \end{enumerate}
\end{theorem}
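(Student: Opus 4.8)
The plan is to prove both implications by translating the spectral data into the language of the Weyl $m$-function and the spectral measure $\rho^{(g)}$. The bridge between the two directions is the identity $\tau_n=\alpha_n^{-1}$, where $\alpha_n$ is the normalizing constant attached to $\lambda_n$ and $\tau_n$ is the quantity defined in (\ref{eq:tau-def-1}). I would establish this identity first, since it makes conditions c) and d) transparent.

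\textbf{Necessity.} Suppose $\{\lambda_k\}_k=\sigma(J^{(g)})$ and $\{\mu_k\}_k=\sigma(J^{(g)}(\theta))$ with $\theta\ne 1$. Condition a) is exactly Proposition~\ref{prop:interlacing}, and condition b) follows from Proposition~\ref{prop:convergence-eigenvalues} applied with $\theta_1=1$, $\theta_2=\theta$ (recall $J^{(g)}=J^{(g)}(1)$), the sum equalling $q_1(\theta^2-1)$. To reach c) and d) I would compute the residue of $m^{(g)}$ at $\lambda_n$ from the product formula $\mathfrak{m}(\zeta)=\prod_k\frac{\zeta-\mu_k}{\zeta-\lambda_k}$ of Proposition~\ref{prop:m-goth-actual-form} together with (\ref{eq:m-through-m}) and (\ref{eq:parameter-recovery-zero}); a direct computation gives $-\res_{\zeta=\lambda_n}m^{(g)}=\tau_n$, and comparison with (\ref{eq:alpha-residue}) yields $\tau_n=\alpha_n^{-1}$. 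With this identification condition c) is merely the finiteness of the even moments $\sum_k\lambda_k^{2m}\alpha_k^{-1}=s_{2m}$, guaranteed by (\ref{eq:moments}); and condition d) is precisely the density of polynomials in $L_2(\reals,d\rho^{(g)})$, which holds by the Preliminaries, since $\{\beta_k\}$ with $\sum_k\abs{\beta_k}^2\tau_k<\infty$ represents the function $f\in L_2(d\rho^{(g)})$ with $f(\lambda_k)=\beta_k$, and the relations $\sum_k\beta_k\lambda_k^m\tau_k=0$ say exactly that $f\perp t^m$ for every $m$.

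\textbf{Sufficiency.} Conversely, assume a)--d). After arranging the sequences by Remark~\ref{rem:arrange_sequences}, I would set $\theta:=\left(\prod_k\mu_k/\lambda_k\right)^{1/2}$; condition b) makes the product convergent, a sign count over the interlacing pattern shows each factor $\mu_k/\lambda_k>0$, and the genuine shift in a) forces $\theta\ne 1$. Defining $\tau_n$ by (\ref{eq:tau-def-1}), a sign analysis driven by the arrangement of Remark~\ref{rem:arrange_sequences} shows $\tau_n>0$ for all $n$, so $\rho:=\sum_k\tau_k\,\delta_{\lambda_k}$ is a positive measure supported on $\{\lambda_k\}_k$. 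By c) all its moments are finite, so $\{t^m\}_{m\ge 0}\subset L_2(\reals,d\rho)$; Gram--Schmidt and Favard's theorem (as in the Preliminaries) then produce a matrix of the form (\ref{eq:jm-0}) and its minimal operator $J$. Condition d) states that polynomials are dense in $L_2(\reals,d\rho)$, which is exactly what identifies $\rho$ as the spectral measure $\rho^{(g)}$ of a self-adjoint extension $J^{(g)}$ of $J$, determining $g\in\reals\cup\{\infty\}$ when $J\ne J^*$; hence $\sigma(J^{(g)})=\supp\rho=\{\lambda_k\}_k$ and $m^{(g)}(\zeta)=\sum_k\tau_k/(\lambda_k-\zeta)$. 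It remains to check that the perturbation (\ref{eq:rank-two}) with this $\theta$ gives $\sigma(J^{(g)}(\theta))=\{\mu_k\}_k$: reading $q_1,b_1$ off (\ref{eq:m-asympt}) and forming $J^{(g)}(\theta)$, the quotient $\mathfrak{m}$ obeys (\ref{eq:m-through-m}), and I would verify $\zeta(\theta^2-1)m^{(g)}(\zeta)+\theta^2=\prod_k\frac{\zeta-\mu_k}{\zeta-\lambda_k}$ by noting that both sides are meromorphic with the same simple poles $\{\lambda_k\}$ and—because of the way $\tau_n$ was defined in (\ref{eq:tau-def-1})—the same residues there, while both tend to $1$ at infinity, so their difference is an entire function vanishing at infinity, hence identically zero. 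Since the zeros of $\mathfrak{m}$ constitute $\sigma(J^{(g)}(\theta))$ (Remark~\ref{rem:zeros-poles}), this yields the claim, and uniqueness of $\theta$, $J$, and $g$ follows from Theorem~\ref{prop:reconstruction} as $0\notin\sigma(J^{(g)})$.

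\textbf{Where the difficulty lies.} The delicate step is the sufficiency passage from condition d) to a genuine self-adjoint Jacobi operator: one must invoke the Hamburger moment problem to know that density of polynomials in $L_2(d\rho)$ characterizes exactly the spectral measures $\rho^{(g)}$ of the self-adjoint extensions of Definition~\ref{def:s-a-ext-def} (the N-extremal solutions), covering the determinate case $J=J^*$ and the indeterminate case $J\ne J^*$ uniformly. The positivity of $\tau_n$ and the entire-function argument identifying $\mathfrak{m}$ with the infinite product are technical but become routine once the interlacing of Remark~\ref{rem:arrange_sequences} is in hand.
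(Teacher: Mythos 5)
Your overall strategy coincides with the paper's: necessity via Propositions \ref{prop:interlacing} and \ref{prop:convergence-eigenvalues} together with the residue identity $\tau_n=\alpha_n^{-1}$ drawn from (\ref{eq:m-through-m}), (\ref{eq:alpha-residue}) and Proposition \ref{prop:m-goth-actual-form}; sufficiency by forming the discrete measure with masses $\tau_k$, applying Gram--Schmidt to get $J$ and its extension $J^{(g)}$, and then matching $\mathfrak m$ with the product $\prod_{k\in M}(\zeta-\mu_k)/(\zeta-\lambda_k)$. Two steps in your sufficiency argument are, however, genuinely missing or unjustified as written.

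First, you never verify that $\sum_{k\in M}\tau_k=1$. This is not cosmetic: the function (\ref{eq:spectral-measure}) is by construction a probability distribution (since $\norm{\delta_1}=1$), so Gram--Schmidt applied to your $\rho$ produces an operator whose normalizing constants satisfy $\alpha_k^{-1}=\tau_k/\sum_{j}\tau_j$ rather than $\tau_k$ unless the total mass equals $1$. Without that normalization the residues of the constructed $m^{(g)}$ are not $-\tau_n$, the value of $\theta$ read off from the asymptotics does not match $\prod_k\mu_k/\lambda_k$, and the identification of $\mathfrak m$ with the infinite product --- hence the conclusion $\sigma(J^{(g)}(\theta))=\{\mu_k\}_k$ --- breaks down. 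The paper closes this by computing $\lim\zeta\widetilde m(\zeta)=-1$ from (\ref{eq:limit-tilde-one}) and comparing with the expansion (\ref{eq:m-tilde-as-sum}); you need that computation. Second, your concluding Liouville-type step (``the difference is an entire function vanishing at infinity, hence identically zero'') is not valid as stated: the decay of both sides is only known as $\zeta\to\infty$ with $\abs{\im\zeta}\ge\epsilon$, and an entire function with no growth control near the real axis cannot be killed by Liouville's theorem alone. The paper instead invokes the representation theorem for meromorphic functions with interlacing real zeros and poles (\cite[Chap.\,7,\,Thm.\,2]{MR589888}) to obtain (\ref{eq:m-tilde-as-sum}) directly, after which the equality $\mathfrak m=\widetilde{\mathfrak m}$ is purely algebraic via $\alpha_k^{-1}=\tau_k$. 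Either cite that theorem or supply a Phragm\'en--Lindel\"of estimate on the strip; as written this step is a gap.
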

\begin{proof}
  In view of Propositions \ref{prop:interlacing} and
  \ref{prop:convergence-eigenvalues}, for proving the necessity of
  the conditions, it only remains to show that for all $n\in M$,
  $\tau_n=\alpha_n^{-1}$. Indeed
  \emph{\ref{finite-moments-sufficient}}) and
  \emph{\ref{density-poly-sufficient}}) will follow from the fact
  that all moments of the spectral measure
  (\ref{eq:rho-discrete}) exist and that the
  polynomials are dense in $L_2(\reals,\rho)$.

  From (\ref{eq:m-through-m}), (\ref{eq:alpha-residue}), and
  Proposition \ref{prop:m-goth-actual-form} , it follows that
\begin{align*}
  \alpha_n^{-1}&=\frac{1}{\theta^2-1}\lim_{\zeta\to\lambda_n}
  \frac{\lambda_n-\zeta}{\zeta}\mathfrak{m}(\zeta)\\
  &=\frac{\mu_n-\lambda_n}{\lambda_n(\theta^2-1)}
  \prod_{\substack{k\in M\\k\ne n}}
    \frac{\lambda_n-\mu_k}{\lambda_n-\lambda_k}\,.
\end{align*}
Hence, by Corollary \ref{cor:theta}, one verifies that
$\tau_n=\alpha_n^{-1}$.

 We now prove that conditions
  \emph{\ref{interlace-sufficient}}),
  \emph{\ref{sum-spectr-sufficient}}),
  \emph{\ref{finite-moments-sufficient}}), and
  \emph{\ref{density-poly-sufficient}}) are sufficient.

The condition \emph{\ref{interlace-sufficient}}) implies that
\begin{equation*}
  \frac{\lambda_n-\mu_k}{\lambda_n-\lambda_k}>0,\qquad
  \forall k\in M\,,\ k\ne n.
\end{equation*}
On the other hand, by \emph{\ref{sum-spectr-sufficient}}) one can
define the number
\begin{equation}
  \label{eq:kappa-def}
  \kappa=\prod_{k\in M}
\frac{\mu_k}{\lambda_k},
\end{equation}
which is clearly positive and also $\kappa>1$ if
$\abs{\mu_k}>\abs{\lambda_k}$ for all $k\in M$ and $\kappa<1$ if
$\abs{\mu_k}<\abs{\lambda_k}$ for all $k\in M$. Thus,
\begin{equation*}
  \frac{\mu_n-\lambda_n}{\lambda_n(\kappa-1)}>0
  \qquad
  \forall n\in M.
\end{equation*}
Hence, for all $n\in M$, $\tau_n>0$, so define the function
\begin{equation}
  \label{eq:rho-fro-proof}
  \rho(t):=\sum_{\lambda_k<t}\tau_k\,.
\end{equation}
It follows from  \emph{\ref{finite-moments-sufficient}}) that the
moments of the measure corresponding to $\rho$ are finite.

Now, on the basis of \emph{\ref{interlace-sufficient}}) and
  \emph{\ref{sum-spectr-sufficient}}), define the meromorphic functions
\begin{equation*}
  \widetilde{\mathfrak{m}}(\zeta):=
\prod_{k\in M}\frac{\zeta-\mu_k}{\zeta-\lambda_k}
\end{equation*}
and
\begin{equation}
  \label{eq:definition-m-tilde}
  \widetilde{m}(\zeta):=
  \frac{\widetilde{\mathfrak{m}}(\zeta)-
\displaystyle\prod_{k\in M}
\frac{\mu_k}{\lambda_k}}
  {\zeta\left(\displaystyle\prod_{k\in M}
\frac{\mu_k}{\lambda_k}-1\right)}\,.
\end{equation}
Thus, taking into account (\ref{eq:tau-def-1}), one has
\begin{equation}
  \label{eq:residue-tilde}
  \res_{\zeta=\lambda_n}\widetilde{m}(\zeta)=
\left(\prod_{k\in M}
\frac{\mu_k}{\lambda_k}-1\right)^{-1}\lim_{\zeta\to\lambda_n}
\frac{\zeta-\lambda_n}{\zeta}\widetilde{\mathfrak{m}}(\zeta)
=-\tau_n\,.
\end{equation}
In view of what was done earlier,
\begin{equation}
  \label{eq:limit-tilde-one}
  \lim_{\substack{\zeta\to\infty \\ \im \zeta\ge\epsilon>0}}
    \widetilde{\mathfrak{m}}(\zeta)=1\,.
\end{equation}
Therefore,
\begin{equation}
  \label{eq:limit-tilde}
  \lim_{\substack{\zeta\to\infty \\ \im \zeta\ge\epsilon>0}}
    \widetilde{m}(\zeta)=\left(\prod_{k\in M}
\frac{\mu_k}{\lambda_k}-1\right)^{-1}
\lim_{\substack{\zeta\to\infty \\ \im \zeta\ge\epsilon>0}}
\frac{\widetilde{\mathfrak{m}}(\zeta)}
{\zeta}=0
\end{equation}
By (\ref{eq:residue-tilde}) and (\ref{eq:limit-tilde}),
\cite[Chap.\,7,\,Thm.\,2]{MR589888} implies that
\begin{equation}
  \label{eq:m-tilde-as-sum}
  \widetilde{m}(\zeta)=
\sum_{k\in M}\frac{\tau_k}{\lambda_k-\zeta}\,.
\end{equation}
On the other hand, using (\ref{eq:limit-tilde-one}), one obtains
\begin{equation*}
  \lim_{\substack{\zeta\to\infty \\ \im \zeta\ge\epsilon>0}}
    \zeta\widetilde{m}(\zeta)=\left(\prod_{k\in M}
\frac{\mu_k}{\lambda_k}-1\right)^{-1}
\lim_{\substack{\zeta\to\infty \\ \im \zeta\ge\epsilon>0}}
\left(\widetilde{\mathfrak{m}}(\zeta)
-\prod_{k\in M}\frac{\mu_k}{\lambda_k}\right)=-1\,.
\end{equation*}
But
\begin{equation*}
  \lim_{\substack{\zeta\to\infty \\ \im \zeta\ge\epsilon>0}}
    \zeta\widetilde{m}(\zeta)=-\sum_{k\in M}\tau_k\,,
\end{equation*}
so it has been proven that, for the function given in
(\ref{eq:rho-fro-proof}),
\begin{equation*}
  \int_\reals d\rho(t)=1\,.
\end{equation*}
Thus the measure corresponding to $\rho$ is appropriately normalized
and all the moments exist, so in $L_2(\reals,\rho)$ apply the
Gram-Schmidt procedure of orthonormalization to the sequence
$\{t_k\}_{k=0}^\infty$ to obtain a Jacobi matrix as was explained in
the Preliminaries. Denote by $J$ the operator whose matrix
representation is the obtained matrix (cf. \cite[Sec. 47]{MR1255973}).
Now, depending on the sequence of moments, $J$ is self-adjoint or
not. If $J=J^*$, the function $\rho$ is the resolution of the identity
of $J$, while if $J\ne J^*$, $\rho$ corresponds to the resolution of
the identity of a self-adjoint extension of $J$. This is a
consequence of condition \emph{\ref{density-poly-sufficient}}) since
it means that the polynomials are dense in $L_2(\reals,\rho)$
\cite[Prop.\,4.15]{MR1627806}.

Finally, denote by $J^{(g)}$ the self-adjoint extension of $J$
corresponding to $\rho$ and consider the operator $J^{(g)}(\theta)$
obtained from $J^{(g)}$ as indicated in the Preliminaries with
$\theta$ given by (\ref{eq:parameter-recovery-zero}). By construction
the sequence $\{\lambda_k\}_{k\in M}$ is the spectrum of
$J^{(g)}$. For the proof to be complete it only remains to show that
$\{\mu_k\}_{k\in M}$ is the spectrum of $J^{(g)}(\theta)$. For the
function given in (\ref{eq:m-goth-def}), taking into account
(\ref{eq:m-through-m}) and (\ref{eq:m-discrete}), one has
\begin{equation*}
  \mathfrak{m}(\zeta)=\theta^2+\zeta\left(\theta^2-1\right)
  \sum_{k\in M}\frac{1}{\alpha_k(\lambda_k-\zeta)}\,.
\end{equation*}
On the other hand, from (\ref{eq:definition-m-tilde}) and
(\ref{eq:m-tilde-as-sum}), it follows that
\begin{equation*}
  \widetilde{\mathfrak{m}}(\zeta)=\theta^2+\zeta\left(\theta^2-1\right)
  \sum_{k\in M}\frac{\tau_k}{\lambda_k-\zeta}\,.
\end{equation*}
But we have already proven that $\alpha_k^{-1}=\tau_k$ for $k\in
M$. Thus $\mathfrak{m}=\widetilde{\mathfrak{m}}$, meaning that the
zeros of $\mathfrak{m}$ are given by the sequence $\{\mu_k\}_{k\in
  M}$.
\end{proof}
\begin{theorem}
  \label{prop:necessary-sufficient-zero}
  Let $\{\lambda_k\}_k$ and $\{\mu_k\}_k$ be two infinite real
  sequences without finite points of accumulation, such that each of
  them contains exactly one element equal zero, and consider any
  positive real number $\theta\ne 1$. There exists a unique operator
  $J$, and a unique $g\in\reals\cup\{\infty\}$ if $J\ne J^*$, such
  that $\{\mu_k\}_k$ is the spectrum of $J^{(g)}(\theta)$ and
  $\{\lambda_k\}_k$ is the spectrum of $J^{(g)}$ if and only if the
  conditions \ref{interlace-sufficient}),
  \ref{sum-spectr-sufficient}), \ref{finite-moments-sufficient}), and
  \ref{density-poly-sufficient}) hold with
  \begin{align*}
        \tau_n&:=
        \frac{\mu_n-\lambda_n}{\lambda_n\left(\theta^2-1\right)}
\prod_{\substack{k\in M\\k\ne n}}
\frac{\mu_k-\lambda_n}{\lambda_k-\lambda_n}\,,
  \quad n\in M\,, n\ne 0\,, \\
       \tau_0&:=(\theta^2-1)^{-1}\left(\theta^2-
\prod_{\substack{k\in M\\k\ne 0}}
\frac{\mu_k}{\lambda_k}\right)\,,
  \end{align*}
where
\begin{equation}
  \label{eq:bound-on-theta}
  \theta^2
  \begin{cases}
    <\displaystyle\prod_{\substack{k\in M\\k\ne 0}}\frac{\mu_k}{\lambda_k} &
    \text{if}\ \{\mu_k\}_k\ \text{is shifted to the
      left in}\
    \reals_+\ \text{w.r.t.}\ \{\lambda_k\}_k\,,\\
    >\displaystyle\prod_{\substack{k\in M\\k\ne 0}}\frac{\mu_k}{\lambda_k} &
    \text{otherwise}.
  \end{cases}
\end{equation}
\end{theorem}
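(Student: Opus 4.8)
The plan is to follow the architecture of the proof of Theorem~\ref{prop:necessary-sufficient}, modifying it to accommodate the common zero eigenvalue $\lambda_0=\mu_0=0$ and the fact that here $\theta$ is prescribed rather than recovered from the spectra. The decisive structural simplification is that, since $\lambda_0=\mu_0=0$, the factor of index $k=0$ in $\prod_{k\in M}\frac{\zeta-\mu_k}{\zeta-\lambda_k}$ equals $1$ identically; consequently every product that caused a $0/0$ ambiguity in the non-zero case collapses to a product over $k\neq 0$, and the quantity $\prod_{k\neq 0}\frac{\mu_k}{\lambda_k}$ takes over the role that $\prod_{k}\frac{\mu_k}{\lambda_k}$ had before. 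Throughout, the prescribed $\theta$ enters the construction through \eqref{eq:m-through-m}, so that the auxiliary Weyl function is built directly from $\theta$ rather than from the product of spectral ratios.

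For necessity, conditions \emph{\ref{interlace-sufficient}}) and \emph{\ref{sum-spectr-sufficient}}) follow at once from Propositions~\ref{prop:interlacing} and \ref{prop:convergence-eigenvalues} (applied with the endpoints $1$ and $\theta$), exactly as before. The heart of the matter is to verify $\tau_n=\alpha_n^{-1}$ for every $n\in M$, whence \emph{\ref{finite-moments-sufficient}}) and \emph{\ref{density-poly-sufficient}}) are inherited from finiteness of the moments of \eqref{eq:rho-discrete} and density of the polynomials in $L_2(\reals,\rho^{(g)})$. For $n\neq 0$ I would combine \eqref{eq:m-through-m}, \eqref{eq:alpha-residue} and Proposition~\ref{prop:m-goth-actual-form}: solving \eqref{eq:m-through-m} for $m^{(g)}$ gives $\alpha_n^{-1}=-\frac{1}{\theta^2-1}\res_{\zeta=\lambda_n}\frac{\mathfrak{m}(\zeta)}{\zeta}$, and evaluating this residue from the product form of $\mathfrak{m}$ reproduces the stated $\tau_n$ once one rewrites $\frac{\lambda_n-\mu_k}{\lambda_n-\lambda_k}=\frac{\mu_k-\lambda_n}{\lambda_k-\lambda_n}$. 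For $n=0$ the value comes straight from \eqref{eq:0-case-m-goth-0}: solving $\theta^2-\alpha_0^{-1}(\theta^2-1)=\prod_{k\neq 0}\frac{\mu_k}{\lambda_k}$ for $\alpha_0^{-1}$ yields precisely the prescribed $\tau_0$.

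For sufficiency I would first establish positivity of all $\tau_n$. For $n\neq 0$ this is the interlacing argument of the previous theorem. The genuinely new point is $\tau_0>0$: writing $\tau_0=(\theta^2-1)^{-1}\bigl(\theta^2-\prod_{k\neq 0}\frac{\mu_k}{\lambda_k}\bigr)$, one sees that the numerator and $\theta^2-1$ must share the same sign, which is exactly what \eqref{eq:bound-on-theta} guarantees after the shift direction in $\reals_+$ is translated, via Proposition~\ref{prop:interlacing}, into the dichotomy $\theta<1$ versus $\theta>1$. With positivity in hand I define $\rho$ by \eqref{eq:rho-fro-proof}, whose moments are finite by \emph{\ref{finite-moments-sufficient}}), and set $\widetilde{\mathfrak{m}}(\zeta):=\prod_{k\in M}\frac{\zeta-\mu_k}{\zeta-\lambda_k}$ together with $\widetilde{m}(\zeta):=\frac{\widetilde{\mathfrak{m}}(\zeta)-\theta^2}{\zeta(\theta^2-1)}$. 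A residue computation (now including the genuine simple pole of $\widetilde m$ at $\zeta=0=\lambda_0$, whose residue equals $-\tau_0$ precisely by the definition of $\tau_0$) gives $\res_{\zeta=\lambda_n}\widetilde m=-\tau_n$ for all $n$; the limits $\widetilde{\mathfrak{m}}(\zeta)\to 1$ and $\widetilde m(\zeta)\to 0$ as $\zeta\to\infty$ then let me invoke \cite[Chap.\,7,\,Thm.\,2]{MR589888} to write $\widetilde m(\zeta)=\sum_{k\in M}\frac{\tau_k}{\lambda_k-\zeta}$, while $\zeta\widetilde m(\zeta)\to -1$ yields the normalization $\sum_k\tau_k=1$. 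The moment problem for $\rho$ produces a Jacobi matrix, hence an operator $J$ with a self-adjoint extension $J^{(g)}$ for which $\sigma(J^{(g)})=\{\lambda_k\}_{k\in M}$; comparing the partial-fraction expansions of $\mathfrak{m}$ and $\widetilde{\mathfrak m}$ as in Theorem~\ref{prop:necessary-sufficient} identifies $\mathfrak m=\widetilde{\mathfrak m}$ and therefore $\sigma(J^{(g)}(\theta))=\{\mu_k\}_{k\in M}$.

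The step I expect to require the most care is the bookkeeping around the zero eigenvalue: checking that the $k=0$ factor drops out of every product, that $\widetilde m$ has a genuine simple pole at $\zeta=0$ (equivalently $\prod_{k\neq 0}\frac{\mu_k}{\lambda_k}\neq\theta^2$, which the strict inequality in \eqref{eq:bound-on-theta} supplies), and that the sign condition making $\tau_0$ positive is faithfully encoded by \eqref{eq:bound-on-theta}. Once these sign and pole-order matters are settled, the remainder is a verbatim adaptation of the proof of Theorem~\ref{prop:necessary-sufficient}.
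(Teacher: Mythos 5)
Your proposal is correct and follows essentially the same route as the paper's proof: reduce to the argument of Theorem~\ref{prop:necessary-sufficient}, verify $\tau_0=\alpha_0^{-1}$ and (\ref{eq:bound-on-theta}) via (\ref{eq:0-case-m-goth-0}) and Remark~\ref{rem:0-case-m-goth-0}, and in the sufficiency part replace $\widetilde{m}$ by $\bigl(\widetilde{\mathfrak{m}}(\zeta)-\theta^2\bigr)/\bigl(\zeta(\theta^2-1)\bigr)$ with the extra simple pole at $\zeta=0$ carrying residue $-\tau_0$. Your explicit attention to the sign and pole-order bookkeeping at $\zeta=0$ is exactly the point the paper flags as the only genuinely new ingredient.
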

\begin{proof}
  The proof is analogous to the proof of
  Theorem~\ref{prop:necessary-sufficient}. Recall that by our
  convention for enumerating the sequences $\lambda_0=\mu_0=0$. Thus,
  for proving the necessity of the conditions
  \emph{\ref{interlace-sufficient}})--\emph{\ref{density-poly-sufficient}}),
  one only should verify that $\tau_0=a_0^{-1}$ and
  (\ref{eq:bound-on-theta}) holds. This is immediate in view of
  (\ref{eq:0-case-m-goth-0}) and Remark~\ref{rem:0-case-m-goth-0}. The
  sufficiency of the conditions is established as in the proof of
  Theorem~\ref{prop:necessary-sufficient}. Here, one substitutes
  (\ref{eq:kappa-def}) by
  \begin{equation*}
      \kappa=\prod_{\substack{k\in M\\k\ne 0}}
\frac{\mu_k}{\lambda_k}
  \end{equation*}
and (\ref{eq:definition-m-tilde}) by
\begin{equation*}
    \widetilde{m}(\zeta):=
  \frac{\widetilde{\mathfrak{m}}(\zeta)-\theta^2}
  {\zeta\left(\theta^2-1\right)}\,,\qquad \zeta\ne 0\,.
\end{equation*}
Then, one verifies that $
\res_{\zeta=\lambda_n}\widetilde{m}(\zeta)=-\tau_n$ for all $n\in M$
and $\sum_{k\in M}\tau_k=1$. Note that (\ref{eq:bound-on-theta})
guarantees that $\tau_n>0$ for all $n\in M$. The rest of the proof
repeats that of  Theorem~\ref{prop:necessary-sufficient} taking
into account that now the zeros of $\mathfrak{m}$ are given by
$\{\mu_k\}_{k\in M}\setminus\{0\}$.
\end{proof}
\begin{theorem}
  \label{prop:necessary-sufficient-aa}
  Given two infinite real sequences $\{\lambda_k\}_k$ and
  $\{\mu_k\}_k$ without finite points of accumulation, such that none
  of them contains the zero, there is a unique positive $\theta$ and a
  unique operator $J=J^*$ such that $\{\mu_k\}_k$ is the spectrum of
  $J^{(g)}(\theta)$ and $\{\lambda_k\}_k$ is the spectrum of $J$
  if and only if conditions \ref{interlace-sufficient}),
  \ref{sum-spectr-sufficient}), \ref{finite-moments-sufficient}),
  together with
  \begin{equation*}
  \text{d')}\qquad\qquad\qquad
  \lim_{n\to\infty}
    \frac{\det\begin{pmatrix}
      s_0 & s_1 & \cdots & s_n\\[1mm]
      s_1 & s_2 & \cdots & s_{n+1}\\[1mm]
      \hdotsfor[2]{4}\\[1mm]
      s_n & s_{n+1} & \cdots & s_{2n}
    \end{pmatrix}}
    {\det\begin{pmatrix}
      s_4 & s_5 & \cdots & s_{n+2}\\[1mm]
      s_5 & s_6 & \cdots & s_{n+3}\\[1mm]
      \hdotsfor[2]{4}\\[1mm]
      s_{n+2} & s_{n+3} & \cdots & s_{2n}
    \end{pmatrix}}=0\,,\qquad\qquad\qquad
  \end{equation*}
where $s_n:=\sum_{k\in M}\lambda_k^n\tau_k$ for $n$ in $\nats\cup\{0\}$
are fulfilled. Note that by our convention on the notation
$J^{(g)}(\theta)$ is a non-singular finite-rank perturbation of $J$
which does not depend on $g$.
\end{theorem}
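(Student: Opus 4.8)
The plan is to run the argument of Theorem~\ref{prop:necessary-sufficient} essentially verbatim, the single genuinely new point being the characterization of the self-adjointness $J=J^*$ by condition d'). Conditions \ref{interlace-sufficient})--\ref{finite-moments-sufficient}) are identical to those of Theorem~\ref{prop:necessary-sufficient}, so they again produce positive weights $\tau_n$, a positive measure $\rho=\sum_{k\in M}\tau_k\delta_{\lambda_k}$ with finite moments $s_m=\sum_{k\in M}\lambda_k^m\tau_k$ and total mass one, the parameter $\theta$ from (\ref{eq:parameter-recovery-zero}), and, after Gram--Schmidt orthonormalization of $\{t^k\}$ in $L_2(\reals,\rho)$, a Jacobi operator $J$. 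By the Preliminaries the assertion ``$J=J^*$'' is precisely the determinacy of the Hamburger moment problem with moments $\{s_m\}$, so the whole statement reduces to proving the equivalence of d') with this determinacy and then copying the construction of $J^{(g)}(\theta)$ from Theorem~\ref{prop:necessary-sufficient}, with $g$ now irrelevant as the last sentence of the statement records.

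To translate determinacy into d') I would argue as follows. Because the hypotheses exclude $0$ from $\{\lambda_k\}_k=\supp\rho$, the measure $\rho$ carries no mass at the origin; hence by the mass formula, which reads $\rho(\{0\})=\big(\sum_{k\ge 0}\abs{P_k(0)}^2\big)^{-1}$ in the determinate case and gives a convergent sum in the indeterminate one, determinacy is equivalent to $\sum_{k\ge 0}\abs{P_k(0)}^2=\infty$, the $P_k$ being the orthonormal polynomials of (\ref{eq:def-normalizing}). Writing $\mathcal D^{(p)}_m:=\det(s_{i+j+p})_{i,j=0}^m$, a constrained least-squares computation (minimize $\int\abs{q}^2\,d\rho$ over polynomials $q$ of degree $\le n$ with $q(0)=1$ and read off the minimizer from a cofactor of the Hankel matrix $(s_{i+j})_{i,j=0}^n$) yields $\sum_{k=0}^n\abs{P_k(0)}^2=\mathcal D^{(2)}_{n-1}/\mathcal D^{(0)}_n$. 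Applying the same identity to $\nu:=t^2\rho$, whose moments are $\{s_{m+2}\}$ and whose shifted Hankel determinants are $\mathcal D^{(2)}_m(\nu)=\mathcal D^{(4)}_m$, factorizes the ratio appearing in d') as
\[
\frac{\det(s_{i+j})_{i,j=0}^n}{\det(s_{i+j+4})_{i,j=0}^{n-2}}
=\frac{\mathcal D^{(0)}_n}{\mathcal D^{(4)}_{n-2}}
=\Big(\sum_{k=0}^n\abs{P_k(0)}^2\Big)^{-1}
 \Big(\sum_{k=0}^{n-1}\abs{P_k^{\nu}(0)}^2\Big)^{-1},
\]
where $P_k^{\nu}$ are the orthonormal polynomials of $\nu$.

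With this factorization both implications become short. For necessity, $J=J^*$ makes the first factor $\sum_{k=0}^n\abs{P_k(0)}^2$ tend to $+\infty$, so the displayed ratio tends to $0$ and d') holds; conditions \ref{interlace-sufficient})--\ref{finite-moments-sufficient}) and the identity $\tau_n=\alpha_n^{-1}$ follow exactly as in Theorem~\ref{prop:necessary-sufficient}, and $\theta$ is recovered from (\ref{eq:parameter-recovery-zero}). For sufficiency, d') forces the product of the two kernel sums to diverge; since multiplication by $t^2$ preserves indeterminacy (a measure $\rho$ with $\rho(\{0\})=0$ being determinate if and only if $t^2\rho$ is), both sums must in fact diverge, whence $\rho$ is determinate and $J=J^*$. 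From here the argument of Theorem~\ref{prop:necessary-sufficient} applies unchanged: $\rho$ is the spectral measure of $J$, the non-singular finite-rank perturbation $J^{(g)}(\theta)=J(\theta)$ is formed with $\theta$ from (\ref{eq:parameter-recovery-zero}), and the coincidence $\mathfrak m=\widetilde{\mathfrak m}$ through (\ref{eq:m-through-m}) shows that $\{\mu_k\}_k$ is the spectrum of $J(\theta)$; uniqueness is as before, now with no extension parameter $g$ to fix.

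The main obstacle is exactly the equivalence used in the second paragraph: establishing the Hankel-determinant form of the Christoffel--Darboux sum at the origin and, for sufficiency, the invariance of (in)determinacy under multiplication by $t^2$, which I would import from the moment-problem literature. I expect the hypothesis $0\notin\{\lambda_k\}_k$ to be essential, since it is what converts the mass formula into the clean statement ``determinate $\iff\sum\abs{P_k(0)}^2=\infty$''; the complementary case $0\in\sigma(J^{(g)})$ is deliberately excluded and is instead governed by Theorem~\ref{prop:necessary-sufficient-zero}.
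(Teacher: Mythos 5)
Your proposal is correct and follows the paper's skeleton exactly --- everything except condition d') is handled by rerunning the proof of Theorem~\ref{prop:necessary-sufficient}, and the whole new content is the equivalence of d') with determinacy of the Hamburger moment problem for the measure built from $\{\lambda_k\}$ and $\{\tau_k\}$. Where you diverge is in how that equivalence is obtained: the paper simply invokes Hamburger's criterion from Akhiezer (Addenda 2, Sec.~9) as a black box in both directions, whereas you rederive it, under the standing hypothesis $0\notin\{\lambda_k\}_k$, from the Christoffel-function identity $\sum_{k=0}^n\abs{P_k(0)}^2=\mathcal D^{(2)}_{n-1}/\mathcal D^{(0)}_n$ and its analogue for $t^2\,d\rho$, which factorizes the ratio in d') as the reciprocal of a product of two reproducing-kernel sums at the origin. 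Your route is longer but self-contained and explains why the determinants in d') are shifted by exactly $4$; the paper's is a one-line citation. Two small inaccuracies in your sufficiency step, neither fatal: the parenthetical claim that $\rho$ with $\rho(\{0\})=0$ is determinate \emph{if and only if} $t^2\rho$ is overshoots --- the forward implication is delicate (it can fail for determinate measures of finite index of determinacy) --- and the conclusion that \emph{both} kernel sums diverge is neither needed nor justified. All you need is: if the product diverges then at least one sum diverges; divergence of $\sum_k\abs{P_k(0)}^2$ already forces determinacy of $\{s_m\}$ (since in the indeterminate case $\sum_k\abs{P_k(z)}^2<\infty$ for \emph{every} $z\in\complex$), while divergence of the second sum forces determinacy of $\{s_{m+2}\}$, which implies determinacy of $\{s_m\}$ by the elementary observation that two distinct solutions $\rho_1\ne\rho_2$ of $\{s_m\}$ yield distinct solutions $t^2\rho_1\ne t^2\rho_2$ of $\{s_{m+2}\}$. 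With that repair your argument is a complete proof of the case of Hamburger's criterion actually used here, and the rest (recovery of $\theta$ from (\ref{eq:parameter-recovery-zero}), the identity $\mathfrak m=\widetilde{\mathfrak m}$, uniqueness with no parameter $g$ to determine) matches the paper verbatim.
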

\begin{proof}
  We again repeat the reasoning of the proof of
  Theorem~\ref{prop:necessary-sufficient}. Clearly, $s_n$
  ($n\in\nats\cup\{0\}$) are the numbers given in
  (\ref{eq:moments}). Thus, on the basis of Hamburger criterion (see
  \cite[Addenda\,2,\,Sec.\,9]{MR0184042}),
  \emph{d')} holds when $J=J^*$. For the sufficiency, note that, due
  to \cite[Addenda\,2,\,Sec.\,9]{MR0184042},
  \emph{d')} implies that the
  measure corresponding to the function given in
  (\ref{eq:rho-fro-proof}) is the unique solution of the moment
  problem, so $J=J^*$ and \emph{d)} is not needed.
\end{proof}
\begin{remark}
  \label{last}
Admittedly,  \emph{d')} is not easy to check, however it allows to give
necessary and sufficient conditions in the self-adjoint case. Note
that one can also give the analogous self-adjoint version of
Theorem~\ref{prop:necessary-sufficient-zero} by
substituting condition \emph{d)} for \emph{d')}.
\end{remark}
\begin{acknowledgments}
The authors thank the referee whose comments have led to an improved
presentation of this work.
\end{acknowledgments}

\end{document}